\DeclarePairedDelimiter{\floor}{\lfloor}{\rfloor}
\DeclareMathOperator*{\gs}{gs}
\def\dated#1{\def\thedate{#1}}%
 \newdimen\xydashw@@
\newdimen\high%
\newdimen\ul%
\newdimen\wdth%
\def\ratchet#1#2{\ifnum#1<#2\global #1=#2\fi}%
\def\ifnextchar#1#2#3{\let\@tempe%
#1\def\@tempa{#2}\def\@tempb{#3}\futurelet%
    \@tempc\@ifnch}%
\def\@ifnch{\ifx \@tempc \@sptoken \let\@tempd\@xifnch%
      \else \ifx \@tempc \@tempe\let\@tempd\@tempa\else\let\@tempd\@tempb\fi%
      \fi \@tempd}%
\def\:{\let\@sptoken= } \:  
\def\:{\@xifnch} \expandafter\def\: {\futurelet\@tempc\@ifnch}%
\let\ifnextchar\@ifnextchar%
\newdimen\axis \axis=\fontdimen22\textfont2%
\def\scalefactor#1{\ul=#1\ul \X@xbase=#1\X@xbase \Y@ybase=#1\Y@ybase}%
\def\hscalefactor#1{\ul=#1\ul \X@xbase=#1\X@xbase}%
\def\vscalefactor#1{\ul=#1\ul \Y@ybase=#1\Y@ybase}%
\def\fontscale#1{%
\if#1h\relax%
\font\xydashfont=xydash10 scaled \magstephalf%
\font\xyatipfont=xyatip10 scaled \magstephalf%
\font\xybtipfont=xybtip10 scaled \magstephalf%
\font\xybsqlfont=xybsql10 scaled \magstephalf%
\font\xycircfont=xycirc10 scaled \magstephalf%
\else%
\font\xydashfont=xydash10 scaled \magstep#1%
\font\xyatipfont=xyatip10 scaled \magstep#1%
\font\xybtipfont=xybtip10 scaled \magstep#1%
\font\xybsqlfont=xybsql10 scaled \magstep#1%
\font\xycircfont=xycirc10 scaled \magstep#1%
\fi}%
\def\bfig{\vcenter\bgroup\xy}%
\def\efig{\endxy\egroup}%
\def\car#1#2\nil{#1}%
\def\morphism{\ifnextchar({\morphismp}{\morphismp(0,0)}}%
\def\morphismp(#1){\ifnextchar|{\morphismpp(#1)}{\morphismpp(#1)|a|}}%
\def\morphismpp(#1)|#2|{\ifnextchar/{\morphismppp(#1)|#2|}%
    {\morphismppp(#1)|#2|/>/}}%
\def\morphismppp(#1)|#2|/#3/{%
    \ifnextchar<{\morphismpppp(#1)|#2|/#3/}%
    {\morphismpppp(#1)|#2|/#3/<\default,0>}}%
\def\morphismpppp(#1,#2)|#3|/#4/<#5,#6>[#7`#8;#9]{%
\xend#1\advance \xend by #5%
\yend#2\advance \yend by #6%
\domorphism(#1,#2)|#3|/#4/<#5,#6>[{#7}`{#8};{#9}]}%
\def\domorphism(#1,#2)|#3|/#4/<#5,#6>[#7`#8;#9]{%
\def\next{\car#4.\nil}%
\if@\next\relax%
 \if#3l%
  \ifnum #6>0%
   \POS(#1,#2)*+!!<0ex,\axis>{#7}\ar#4^-{#9} (\xend,\yend)*+!!<0ex,\axis>{#8}%
  \else%
   \POS(#1,#2)*+!!<0ex,\axis>{#7}\ar#4_-{#9} (\xend,\yend)*+!!<0ex,\axis>{#8}%
  \fi%
 \else \if#3m%
    \setbox0\hbox{$#9$}%
   \ifdim \wd0=0pt%
     \POS(#1,#2)*+!!<0ex,\axis>{#7}\ar#4 (\xend,\yend)*+!!<0ex,\axis>{#8}%
   \else%
     \POS(#1,#2)*+!!<0ex,\axis>{#7}\ar#4|-*+<1pt,4pt>{\labelstyle#9}%
       (\xend,\yend)*+!!<0ex,\axis>{#8}%
   \fi%
 \else \if#3r%
  \ifnum #6<0%
   \POS(#1,#2)*+!!<0ex,\axis>{#7}\ar#4^-{#9} (\xend,\yend)*+!!<0ex,\axis>{#8}%
  \else%
   \POS(#1,#2)*+!!<0ex,\axis>{#7}\ar#4_-{#9} (\xend,\yend)*+!!<0ex,\axis>{#8}%
  \fi%
 \else \if#3a%
  \ifnum #5>0%
   \POS(#1,#2)*+!!<0ex,\axis>{#7}\ar#4^-{#9} (\xend,\yend)*+!!<0ex,\axis>{#8}%
  \else%
   \POS(#1,#2)*+!!<0ex,\axis>{#7}\ar#4_-{#9} (\xend,\yend)*+!!<0ex,\axis>{#8}%
  \fi%
 \else \if#3b%
  \ifnum #5<0%
   \POS(#1,#2)*+!!<0ex,\axis>{#7}\ar#4^-{#9} (\xend,\yend)*+!!<0ex,\axis>{#8}%
  \else%
   \POS(#1,#2)*+!!<0ex,\axis>{#7}\ar#4_-{#9} (\xend,\yend)*+!!<0ex,\axis>{#8}%
  \fi%
 \else%
   \POS(#1,#2)*+!!<0ex,\axis>{#7}\ar#4 (\xend,\yend)*+!!<0ex,\axis>{#8}%
 \fi\fi\fi\fi\fi%
\else%
 \if#3l%
  \ifnum #6>0%
   \POS(#1,#2)*+!!<0ex,\axis>{#7}\ar@{#4}^-{#9} (\xend,\yend)*+!!<0ex,\axis>{#8}%
  \else%
   \POS(#1,#2)*+!!<0ex,\axis>{#7}\ar@{#4}_-{#9} (\xend,\yend)*+!!<0ex,\axis>{#8}%
  \fi%
 \else \if#3m%
    \setbox0\hbox{$#9$}%
   \ifdim \wd0=0pt%
     \POS(#1,#2)*+!!<0ex,\axis>{#7}\ar@{#4} (\xend,\yend)*+!!<0ex,\axis>{#8}%
   \else%
     \POS(#1,#2)*+!!<0ex,\axis>{#7}\ar@{#4}|-*+<1pt,4pt>{\labelstyle#9}%
         (\xend,\yend)*+!!<0ex,\axis>{#8}%
   \fi%
 \else \if#3r%
  \ifnum #6<0%
   \POS(#1,#2)*+!!<0ex,\axis>{#7}\ar@{#4}^-{#9} (\xend,\yend)*+!!<0ex,\axis>{#8}%
  \else%
   \POS(#1,#2)*+!!<0ex,\axis>{#7}\ar@{#4}_-{#9} (\xend,\yend)*+!!<0ex,\axis>{#8}%
  \fi%
 \else \if#3a%
  \ifnum #5>0%
   \POS(#1,#2)*+!!<0ex,\axis>{#7}\ar@{#4}^-{#9} (\xend,\yend)*+!!<0ex,\axis>{#8}%
  \else%
   \POS(#1,#2)*+!!<0ex,\axis>{#7}\ar@{#4}_-{#9} (\xend,\yend)*+!!<0ex,\axis>{#8}%
  \fi%
 \else \if#3b%
  \ifnum #5<0%
   \POS(#1,#2)*+!!<0ex,\axis>{#7}\ar@{#4}^-{#9} (\xend,\yend)*+!!<0ex,\axis>{#8}%
  \else%
   \POS(#1,#2)*+!!<0ex,\axis>{#7}\ar@{#4}_-{#9} (\xend,\yend)*+!!<0ex,\axis>{#8}%
  \fi%
 \else%
   \POS(#1,#2)*+!!<0ex,\axis>{#7}\ar@{#4} (\xend,\yend)*+!!<0ex,\axis>{#8}%
 \fi\fi\fi\fi\fi%
\fi\ignorespaces}%
\def\vect(#1,#2)/#3/<#4,#5>{%
 \xend#1 \yend#2 \advance\xend by #4 \advance\yend by #5%
     \POS(#1,#2)\ar#3 (\xend,\yend)}%
\def\squarepppp(#1,#2)|#3|/#4`#5`#6`#7/<#8>[#9]{%
\xpos#1\ypos#2%
\def\next|##1##2##3##4|{%
 \def\xa{##1}\def\xb{##2}\def\xc{##3}\def\xd{##4}\ignorespaces}%
\next|#3|%
\def\next<##1,##2>{\deltax=##1\deltay=##2\ignorespaces}%
\next<#8>%
\def\next[##1`##2`##3`##4;##5`##6`##7`##8]{%
    \def\nodea{##1}\def\nodeb{##2}\def\nodec{##3}\def\noded{##4}%
    \def\labela{##5}\def\labelb{##6}\def\labelc{##7}\def\labeld{##8}\ignorespaces}%
\next[#9]%
\morphism(\xpos,\ypos)|\xd|/{#7}/<\deltax,0>[\nodec`\noded;\labeld]%
\advance \ypos by \deltay%
\morphism(\xpos,\ypos)|\xb|/{#5}/<0,-\deltay>[\nodea`\nodec;\labelb]%
\morphism(\xpos,\ypos)|\xa|/{#4}/<\deltax,0>[\nodea`\nodeb;\labela]%
 \advance \xpos by \deltax%
\morphism(\xpos,\ypos)|\xc|/{#6}/<0,-\deltay>[\nodeb`\noded;\labelc]%
\ignorespaces}%
\def\square{\ifnextchar({\squarep}{\squarep(0,0)}}%
\def\squarep(#1){\ifnextchar|{\squarepp(#1)}{\squarepp(#1)|alrb|}}%
\def\squarepp(#1)|#2|{\ifnextchar/{\squareppp(#1)|#2|}%
    {\squareppp(#1)|#2|/>`>`>`>/}}%
\def\squareppp(#1)|#2|/#3`#4`#5`#6/{%
    \ifnextchar<{\squarepppp(#1)|#2|/#3`#4`#5`#6/}%
    {\squarepppp(#1)|#2|/#3`#4`#5`#6/<\default,\default>}}%
\def\diamondpppp(#1,#2)|#3|/#4`#5`#6`#7/<#8>[#9]{%
\xpos#1\ypos#2%
\def\next|##1##2##3##4|{%
 \def\xa{##1}\def\xb{##2}\def\xc{##3}\def\xd{##4}\ignorespaces}%
\next|#3|%
\def\next<##1,##2>{\deltax=##1\deltay=##2\ignorespaces}%
\next<#8>%
\def\next[##1`##2`##3`##4;##5`##6`##7`##8]{%
    \def\nodea{##1}\def\nodeb{##2}\def\nodec{##3}\def\noded{##4}%
    \def\labela{##5}\def\labelb{##6}\def\labelc{##7}%
\def\labeld{##8}\ignorespaces}%
\next[#9]%
\advance\ypos\deltay
\morphism(\xpos,\ypos)|\xc|/{#6}/<\deltax,-\deltay>[\nodeb`\noded;\labelc]%
\advance\xpos \deltax
\advance\xpos \deltax
\morphism(\xpos,\ypos)|\xd|/{#7}/<-\deltax,-\deltay>[\nodec`\noded;\labeld]%
\advance\ypos\deltay \advance\xpos -\deltax
\morphism(\xpos,\ypos)|\xa|/{#4}/<-\deltax,-\deltay>[\nodea`\nodeb;\labela]%
\morphism(\xpos,\ypos)|\xb|/{#5}/<\deltax,-\deltay>[\nodea`\nodec;\labelb]%
}
\def\diamondp(#1){\ifnextchar|{\diamondpp(#1)}{\diamondpp(#1)|lrlr|}}%
\def\diamondpp(#1)|#2|{\ifnextchar/{\diamondppp(#1)|#2|}%
    {\diamondppp(#1)|#2|/>`>`>`>/}}%
\def\diamondppp(#1)|#2|/#3`#4`#5`#6/{%
    \ifnextchar<{\diamondpppp(#1)|#2|/#3`#4`#5`#6/}%
    {\diamondpppp(#1)|#2|/#3`#4`#5`#6/<400,400>}}%
\def\ptrianglepppp(#1,#2)|#3|/#4`#5`#6/<#7>[#8]{%
\xpos#1\ypos#2%
\def\next|##1##2##3|{\def\xa{##1}\def\xb{##2}\def\xc{##3}}%
\next|#3|%
\def\next<##1,##2>{\deltax=##1\deltay=##2\ignorespaces}%
\next<#7>%
\def\next[##1`##2`##3;##4`##5`##6]{%
    \def\nodea{##1}\def\nodeb{##2}\def\nodec{##3}%
    \def\labela{##4}\def\labelb{##5}\def\labelc{##6}}%
\next[#8]%
\advance\ypos by \deltay%
\morphism(\xpos,\ypos)|\xa|/{#4}/<\deltax,0>[\nodea`\nodeb;\labela]%
\morphism(\xpos,\ypos)|\xb|/{#5}/<0,-\deltay>[\nodea`\nodec;\labelb]%
\advance\xpos by \deltax%
\morphism(\xpos,\ypos)|\xc|/{#6}/<-\deltax,-\deltay>[\nodeb`\nodec;\labelc]%
\ignorespaces}%
\def\qtrianglepppp(#1,#2)|#3|/#4`#5`#6/<#7>[#8]{%
\xpos#1\ypos#2%
\def\next|##1##2##3|{\def\xa{##1}\def\xb{##2}\def\xc{##3}}%
\next|#3|%
\def\next<##1,##2>{\deltax=##1\deltay=##2\ignorespaces}%
\next<#7>%
\def\next[##1`##2`##3;##4`##5`##6]{%
    \def\nodea{##1}\def\nodeb{##2}\def\nodec{##3}%
    \def\labela{##4}\def\labelb{##5}\def\labelc{##6}}%
\next[#8]%
\advance\ypos by \deltay%
\morphism(\xpos,\ypos)|\xa|/{#4}/<\deltax,0>[\nodea`\nodeb;\labela]%
\morphism(\xpos,\ypos)|\xb|/{#5}/<\deltax,-\deltay>[\nodea`\nodec;\labelb]%
\advance\xpos by \deltax%
\morphism(\xpos,\ypos)|\xc|/{#6}/<0,-\deltay>[\nodeb`\nodec;\labelc]%
\ignorespaces}%
\def\dtrianglepppp(#1,#2)|#3|/#4`#5`#6/<#7>[#8]{%
\xpos#1\ypos#2%
\def\next|##1##2##3|{\def\xa{##1}\def\xb{##2}\def\xc{##3}}%
\next|#3|%
\def\next<##1,##2>{\deltax=##1\deltay=##2\ignorespaces}%
\next<#7>%
\def\next[##1`##2`##3;##4`##5`##6]{%
    \def\nodea{##1}\def\nodeb{##2}\def\nodec{##3}%
    \def\labela{##4}\def\labelb{##5}\def\labelc{##6}}%
\next[#8]%
\morphism(\xpos,\ypos)|\xc|/{#6}/<\deltax,0>[\nodeb`\nodec;\labelc]%
\advance\ypos by \deltay\advance \xpos by \deltax%
\morphism(\xpos,\ypos)|\xa|/{#4}/<-\deltax,-\deltay>[\nodea`\nodeb;\labela]%
\morphism(\xpos,\ypos)|\xb|/{#5}/<0,-\deltay>[\nodea`\nodec;\labelb]%
\ignorespaces}%
\def\btrianglepppp(#1,#2)|#3|/#4`#5`#6/<#7>[#8]{%
\xpos#1\ypos#2%
\def\next|##1##2##3|{\def\xa{##1}\def\xb{##2}\def\xc{##3}}%
\next|#3|%
\def\next<##1,##2>{\deltax=##1\deltay=##2\ignorespaces}%
\next<#7>%
\def\next[##1`##2`##3;##4`##5`##6]{%
    \def\nodea{##1}\def\nodeb{##2}\def\nodec{##3}%
    \def\labela{##4}\def\labelb{##5}\def\labelc{##6}}%
\next[#8]%
\morphism(\xpos,\ypos)|\xc|/{#6}/<\deltax,0>[\nodeb`\nodec;\labelc]%
\advance\ypos by \deltay%
\morphism(\xpos,\ypos)|\xa|/{#4}/<0,-\deltay>[\nodea`\nodeb;\labela]%
\morphism(\xpos,\ypos)|\xb|/{#5}/<\deltax,-\deltay>[\nodea`\nodec;\labelb]%
\ignorespaces}%
\def\Atrianglepppp(#1,#2)|#3|/#4`#5`#6/<#7>[#8]{%
\xpos#1\ypos#2%
\def\next|##1##2##3|{\def\xa{##1}\def\xb{##2}\def\xc{##3}}%
\next|#3|%
\def\next<##1,##2>{\deltax=##1\deltay=##2\ignorespaces}%
\next<#7>%
\def\next[##1`##2`##3;##4`##5`##6]{%
    \def\nodea{##1}\def\nodeb{##2}\def\nodec{##3}%
    \def\labela{##4}\def\labelb{##5}\def\labelc{##6}}%
\next[#8]%
\multiply\deltax by 2%
\morphism(\xpos,\ypos)|\xc|/{#6}/<\deltax,0>[\nodeb`\nodec;\labelc]%
\divide\deltax by 2%
\advance\ypos by \deltay\advance\xpos by \deltax%
\morphism(\xpos,\ypos)|\xa|/{#4}/<-\deltax,-\deltay>[\nodea`\nodeb;\labela]%
\morphism(\xpos,\ypos)|\xb|/{#5}/<\deltax,-\deltay>[\nodea`\nodec;\labelb]%
\ignorespaces}%
\def\Vtrianglepppp(#1,#2)|#3|/#4`#5`#6/<#7>[#8]{%
\xpos#1\ypos#2%
\def\next|##1##2##3|{\def\xa{##1}\def\xb{##2}\def\xc{##3}}%
\next|#3|%
\def\next<##1,##2>{\deltax=##1\deltay=##2\ignorespaces}%
\next<#7>%
\def\next[##1`##2`##3;##4`##5`##6]{%
    \def\nodea{##1}\def\nodeb{##2}\def\nodec{##3}%
    \def\labela{##4}\def\labelb{##5}\def\labelc{##6}}%
\next[#8]%
\advance\ypos by \deltay%
\morphism(\xpos,\ypos)|\xb|/{#5}/<\deltax,-\deltay>[\nodea`\nodec;\labelb]%
\multiply\deltax by 2%
\morphism(\xpos,\ypos)|\xa|/{#4}/<\deltax,0>[\nodea`\nodeb;\labela]%
\advance\xpos by \deltax \divide \deltax by 2%
\morphism(\xpos,\ypos)|\xc|/{#6}/<-\deltax,-\deltay>[\nodeb`\nodec;\labelc]%
\ignorespaces}%
\def\Ctrianglepppp(#1,#2)|#3|/#4`#5`#6/<#7>[#8]{%
\xpos#1\ypos#2%
\def\next|##1##2##3|{\def\xa{##1}\def\xb{##2}\def\xc{##3}}%
\next|#3|%
\def\next<##1,##2>{\deltax=##1\deltay=##2\ignorespaces}%
\next<#7>%
\def\next[##1`##2`##3;##4`##5`##6]{%
    \def\nodea{##1}\def\nodeb{##2}\def\nodec{##3}%
    \def\labela{##4}\def\labelb{##5}\def\labelc{##6}}%
\next[#8]%
\advance \ypos by \deltay%
\morphism(\xpos,\ypos)|\xc|/{#6}/<\deltax,-\deltay>[\nodeb`\nodec;\labelc]%
\advance\ypos by \deltay \advance \xpos by \deltax%
\morphism(\xpos,\ypos)|\xa|/{#4}/<-\deltax,-\deltay>[\nodea`\nodeb;\labela]%
\multiply\deltay by 2%
\morphism(\xpos,\ypos)|\xb|/{#5}/<0,-\deltay>[\nodea`\nodec;\labelb]%
\ignorespaces}%
\def\Dtrianglepppp(#1,#2)|#3|/#4`#5`#6/<#7>[#8]{%
\xpos#1\ypos#2%
\def\next|##1##2##3|{\def\xa{##1}\def\xb{##2}\def\xc{##3}}%
\next|#3|%
\def\next<##1,##2>{\deltax=##1\deltay=##2\ignorespaces}%
\next<#7>%
\def\next[##1`##2`##3;##4`##5`##6]{%
    \def\nodea{##1}\def\nodeb{##2}\def\nodec{##3}%
    \def\labela{##4}\def\labelb{##5}\def\labelc{##6}}%
\next[#8]%
\advance\xpos by \deltax \advance\ypos by \deltay%
\morphism(\xpos,\ypos)|\xc|/{#6}/<-\deltax,-\deltay>[\nodeb`\nodec;\labelc]%
\advance\xpos by -\deltax \advance\ypos by \deltay%
\morphism(\xpos,\ypos)|\xb|/{#5}/<\deltax,-\deltay>[\nodea`\nodeb;\labelb]%
\multiply \deltay by 2%
\morphism(\xpos,\ypos)|\xa|/{#4}/<0,-\deltay>[\nodea`\nodec;\labela]%
\ignorespaces}%
\def\ptrianglep(#1){\ifnextchar|{\ptrianglepp(#1)}{\ptrianglepp(#1)|alr|}}%
\def\ptrianglepp(#1)|#2|{\ifnextchar/{\ptriangleppp(#1)|#2|}%
    {\ptriangleppp(#1)|#2|/>`>`>/}}%
\def\ptriangleppp(#1)|#2|/#3`#4`#5/{%
    \ifnextchar<{\ptrianglepppp(#1)|#2|/#3`#4`#5/}%
    {\ptrianglepppp(#1)|#2|/#3`#4`#5/<\default,\default>}}%
\def\qtrianglep(#1){\ifnextchar|{\qtrianglepp(#1)}{\qtrianglepp(#1)|alr|}}%
\def\qtrianglepp(#1)|#2|{\ifnextchar/{\qtriangleppp(#1)|#2|}%
    {\qtriangleppp(#1)|#2|/>`>`>/}}%
\def\qtriangleppp(#1)|#2|/#3`#4`#5/{%
    \ifnextchar<{\qtrianglepppp(#1)|#2|/#3`#4`#5/}%
    {\qtrianglepppp(#1)|#2|/#3`#4`#5/<\default,\default>}}%
\def\dtrianglep(#1){\ifnextchar|{\dtrianglepp(#1)}{\dtrianglepp(#1)|lrb|}}%
\def\dtrianglepp(#1)|#2|{\ifnextchar/{\dtriangleppp(#1)|#2|}%
    {\dtriangleppp(#1)|#2|/>`>`>/}}%
\def\dtriangleppp(#1)|#2|/#3`#4`#5/{%
    \ifnextchar<{\dtrianglepppp(#1)|#2|/#3`#4`#5/}%
    {\dtrianglepppp(#1)|#2|/#3`#4`#5/<\default,\default>}}%
\def\btrianglep(#1){\ifnextchar|{\btrianglepp(#1)}{\btrianglepp(#1)|lrb|}}%
\def\btrianglepp(#1)|#2|{\ifnextchar/{\btriangleppp(#1)|#2|}%
    {\btriangleppp(#1)|#2|/>`>`>/}}%
\def\btriangleppp(#1)|#2|/#3`#4`#5/{%
    \ifnextchar<{\btrianglepppp(#1)|#2|/#3`#4`#5/}%
    {\btrianglepppp(#1)|#2|/#3`#4`#5/<\default,\default>}}%
\def\Atrianglep(#1){\ifnextchar|{\Atrianglepp(#1)}{\Atrianglepp(#1)|lrb|}}%
\def\Atrianglepp(#1)|#2|{\ifnextchar/{\Atriangleppp(#1)|#2|}%
    {\Atriangleppp(#1)|#2|/>`>`>/}}%
\def\Atriangleppp(#1)|#2|/#3`#4`#5/{%
    \ifnextchar<{\Atrianglepppp(#1)|#2|/#3`#4`#5/}%
    {\Atrianglepppp(#1)|#2|/#3`#4`#5/<\default,\default>}}%
\def\Vtrianglep(#1){\ifnextchar|{\Vtrianglepp(#1)}{\Vtrianglepp(#1)|alb|}}%
\def\Vtrianglepp(#1)|#2|{\ifnextchar/{\Vtriangleppp(#1)|#2|}%
    {\Vtriangleppp(#1)|#2|/>`>`>/}}%
\def\Vtriangleppp(#1)|#2|/#3`#4`#5/{%
    \ifnextchar<{\Vtrianglepppp(#1)|#2|/#3`#4`#5/}%
    {\Vtrianglepppp(#1)|#2|/#3`#4`#5/<\default,\default>}}%
\def\Ctrianglep(#1){\ifnextchar|{\Ctrianglepp(#1)}{\Ctrianglepp(#1)|arb|}}%
\def\Ctrianglepp(#1)|#2|{\ifnextchar/{\Ctriangleppp(#1)|#2|}%
    {\Ctriangleppp(#1)|#2|/>`>`>/}}%
\def\Ctriangleppp(#1)|#2|/#3`#4`#5/{%
    \ifnextchar<{\Ctrianglepppp(#1)|#2|/#3`#4`#5/}%
    {\Ctrianglepppp(#1)|#2|/#3`#4`#5/<\default,\default>}}%
\def\Dtrianglep(#1){\ifnextchar|{\Dtrianglepp(#1)}{\Dtrianglepp(#1)|lab|}}%
\def\Dtrianglepp(#1)|#2|{\ifnextchar/{\Dtriangleppp(#1)|#2|}%
    {\Dtriangleppp(#1)|#2|/>`>`>/}}%
\def\Dtriangleppp(#1)|#2|/#3`#4`#5/{%
    \ifnextchar<{\Dtrianglepppp(#1)|#2|/#3`#4`#5/}%
    {\Dtrianglepppp(#1)|#2|/#3`#4`#5/<\default,\default>}}%
\def\Atrianglepairpppp(#1)|#2|/#3`#4`#5`#6`#7/<#8>[#9]{%
\def\next(##1,##2){\xpos##1\ypos##2}%
\next(#1)%
\def\next|##1##2##3##4##5|{\def\xa{##1}\def\xb{##2}%
\def\xc{##3}\def\xd{##4}\def\xe{##5}}%
\next|#2|%
\def\next<##1,##2>{\deltax=##1\deltay=##2\ignorespaces}%
\next<#8>%
\def\next[##1`##2`##3`##4;##5`##6`##7`##8`##9]{%
 \def\nodea{##1}\def\nodeb{##2}\def\nodec{##3}\def\noded{##4}%
 \def\labela{##5}\def\labelb{##6}\def\labelc{##7}\def\labeld{##8}\def\labele{##9}}%
\next[#9]%
\morphism(\xpos,\ypos)|\xd|/{#6}/<\deltax,0>[\nodeb`\nodec;\labeld]%
\advance\xpos by \deltax%
\morphism(\xpos,\ypos)|\xe|/{#7}/<\deltax,0>[\nodec`\noded;\labele]%
\advance\ypos by \deltay%
\morphism(\xpos,\ypos)|\xa|/{#3}/<-\deltax,-\deltay>[\nodea`\nodeb;\labela]%
\morphism(\xpos,\ypos)|\xb|/{#4}/<0,-\deltay>[\nodea`\nodec;\labelb]%
\morphism(\xpos,\ypos)|\xc|/{#5}/<\deltax,-\deltay>[\nodea`\noded;\labelc]%
\ignorespaces}%
\def\Vtrianglepairpppp(#1)|#2|/#3`#4`#5`#6`#7/<#8>[#9]{%
\def\next(##1,##2){\xpos##1\ypos##2}%
\next(#1)%
\def\next|##1##2##3##4##5|{\def\xa{##1}\def\xb{##2}%
\def\xc{##3}\def\xd{##4}\def\xe{##5}}%
\next|#2|%
\def\next<##1,##2>{\deltax=##1\deltay=##2\ignorespaces}%
\next<#8>%
\def\next[##1`##2`##3`##4;##5`##6`##7`##8`##9]{%
 \def\nodea{##1}\def\nodeb{##2}\def\nodec{##3}\def\noded{##4}%
 \def\labela{##5}\def\labelb{##6}\def\labelc{##7}\def\labeld{##8}\def\labele{##9}}%
\next[#9]%
\advance\ypos by \deltay%
\morphism(\xpos,\ypos)|\xa|/{#3}/<\deltax,0>[\nodea`\nodeb;\labela]%
\morphism(\xpos,\ypos)|\xc|/{#5}/<\deltax,-\deltay>[\nodea`\noded;\labelc]%
\advance\xpos by \deltax%
\morphism(\xpos,\ypos)|\xb|/{#4}/<\deltax,0>[\nodeb`\nodec;\labelb]%
\morphism(\xpos,\ypos)|\xd|/{#6}/<0,-\deltay>[\nodeb`\noded;\labeld]%
\advance\xpos by \deltax%
\morphism(\xpos,\ypos)|\xe|/{#7}/<-\deltax,-\deltay>[\nodec`\noded;\labele]%
\ignorespaces}%
\def\Ctrianglepairpppp(#1)|#2|/#3`#4`#5`#6`#7/<#8>[#9]{%
\def\next(##1,##2){\xpos##1\ypos##2}%
\next(#1)%
\def\next|##1##2##3##4##5|{\def\xa{##1}\def\xb{##2}%
\def\xc{##3}\def\xd{##4}\def\xe{##5}}%
\next|#2|%
\def\next<##1,##2>{\deltax=##1\deltay=##2\ignorespaces}%
\next<#8>%
\def\next[##1`##2`##3`##4;##5`##6`##7`##8`##9]{%
 \def\nodea{##1}\def\nodeb{##2}\def\nodec{##3}\def\noded{##4}%
 \def\labela{##5}\def\labelb{##6}\def\labelc{##7}\def\labeld{##8}\def\labele{##9}}%
\next[#9]%
\advance\ypos by \deltay%
\morphism(\xpos,\ypos)|\xe|/{#7}/<0,-\deltay>[\nodec`\noded;\labele]%
\advance\xpos by -\deltax%
\morphism(\xpos,\ypos)|\xc|/{#5}/<\deltax,0>[\nodeb`\nodec;\labelc]%
\morphism(\xpos,\ypos)|\xd|/{#6}/<\deltax,-\deltay>[\nodeb`\noded;\labeld]%
\advance\ypos by \deltay%
\advance\xpos by \deltax%
\morphism(\xpos,\ypos)|\xa|/{#3}/<-\deltax,-\deltay>[\nodea`\nodeb;\labela]%
\morphism(\xpos,\ypos)|\xb|/{#4}/<0,-\deltay>[\nodea`\nodec;\labelb]%
\ignorespaces}%
\def\Dtrianglepairpppp(#1)|#2|/#3`#4`#5`#6`#7/<#8>[#9]{%
\def\next(##1,##2){\xpos##1\ypos##2}%
\next(#1)%
\def\next|##1##2##3##4##5|{\def\xa{##1}\def\xb{##2}%
\def\xc{##3}\def\xd{##4}\def\xe{##5}}%
\next|#2|%
\def\next<##1,##2>{\deltax=##1\deltay=##2\ignorespaces}%
\next<#8>%
\def\next[##1`##2`##3`##4;##5`##6`##7`##8`##9]{%
 \def\nodea{##1}\def\nodeb{##2}\def\nodec{##3}\def\noded{##4}%
 \def\labela{##5}\def\labelb{##6}\def\labelc{##7}\def\labeld{##8}\def\labele{##9}}%
\next[#9]%
\advance\ypos by \deltay%
\morphism(\xpos,\ypos)|\xc|/{#5}/<\deltax,0>[\nodeb`\nodec;\labelc]%
\morphism(\xpos,\ypos)|\xd|/{#6}/<0,-\deltay>[\nodeb`\noded;\labeld]%
\advance\ypos by \deltay%
\morphism(\xpos,\ypos)|\xa|/{#3}/<0,-\deltay>[\nodea`\nodeb;\labela]%
\morphism(\xpos,\ypos)|\xb|/{#4}/<\deltax,-\deltay>[\nodea`\nodec;\labelb]%
\advance\ypos by -\deltay%
\advance\xpos by \deltax%
\morphism(\xpos,\ypos)|\xe|/{#7}/<-\deltax,-\deltay>[\nodec`\noded;\labele]%
\ignorespaces}%
\def\Atrianglepairp(#1){\ifnextchar|{\Atrianglepairpp(#1)}%
{\Atrianglepairpp(#1)|lmrbb|}}%
\def\Atrianglepairpp(#1)|#2|{\ifnextchar/{\Atrianglepairppp(#1)|#2|}%
    {\Atrianglepairppp(#1)|#2|/>`>`>`>`>/}}%
\def\Atrianglepairppp(#1)|#2|/#3`#4`#5`#6`#7/{%
    \ifnextchar<{\Atrianglepairpppp(#1)|#2|/#3`#4`#5`#6`#7/}%
    {\Atrianglepairpppp(#1)|#2|/#3`#4`#5`#6`#7/<\default,\default>}}%
\def\Vtrianglepairp(#1){\ifnextchar|{\Vtrianglepairpp(#1)}%
{\Vtrianglepairpp(#1)|aalmr|}}%
\def\Vtrianglepairpp(#1)|#2|{\ifnextchar/{\Vtrianglepairppp(#1)|#2|}%
    {\Vtrianglepairppp(#1)|#2|/>`>`>`>`>/}}%
\def\Vtrianglepairppp(#1)|#2|/#3`#4`#5`#6`#7/{%
    \ifnextchar<{\Vtrianglepairpppp(#1)|#2|/#3`#4`#5`#6`#7/}%
    {\Vtrianglepairpppp(#1)|#2|/#3`#4`#5`#6`#7/<\default,\default>}}%
\def\Ctrianglepairp(#1){\ifnextchar|{\Ctrianglepairpp(#1)}%
{\Ctrianglepairpp(#1)|lrmlr|}}%
\def\Ctrianglepairpp(#1)|#2|{\ifnextchar/{\Ctrianglepairppp(#1)|#2|}%
    {\Ctrianglepairppp(#1)|#2|/>`>`>`>`>/}}%
\def\Ctrianglepairppp(#1)|#2|/#3`#4`#5`#6`#7/{%
    \ifnextchar<{\Ctrianglepairpppp(#1)|#2|/#3`#4`#5`#6`#7/}%
    {\Ctrianglepairpppp(#1)|#2|/#3`#4`#5`#6`#7/<\default,\default>}}%
\def\Dtrianglepairp(#1){\ifnextchar|{\Dtrianglepairpp(#1)}%
{\Dtrianglepairpp(#1)|lrmlr|}}%
\def\Dtrianglepairpp(#1)|#2|{\ifnextchar/{\Dtrianglepairppp(#1)|#2|}%
    {\Dtrianglepairppp(#1)|#2|/>`>`>`>`>/}}%
\def\Dtrianglepairppp(#1)|#2|/#3`#4`#5`#6`#7/{%
    \ifnextchar<{\Dtrianglepairpppp(#1)|#2|/#3`#4`#5`#6`#7/}%
    {\Dtrianglepairpppp(#1)|#2|/#3`#4`#5`#6`#7/<\default,\default>}}%
\def\pplace[#1](#2,#3)[#4]{\POS(#2,#3)*+!!<0ex,\axis>!#1{#4}\ignorespaces}%
\def\cplace(#1,#2)[#3]{\POS(#1,#2)*+!!<0ex,\axis>{#3}\ignorespaces}%
\def\place{\ifnextchar[{\pplace}{\cplace}}%
\def\pullback#1]#2]{\square#1]\trident#2]\ignorespaces}%
\def\tridentppp|#1#2#3|/#4`#5`#6/<#7,#8>[#9]{%
\def\next[##1;##2`##3`##4]{\def\nodee{##1}\def\labele{##2}%
   \def\labelf{##3}\def\labelg{##4}}%
\next[#9]%
\advance \xpos by -\deltax%
\advance \xpos by -#7\advance \ypos by #8%
\advance\deltax by #7%
\morphism(\xpos,\ypos)|#1|/{#4}/<\deltax,-#8>[\nodee`\nodeb;\labele]%
\advance\deltax by -#7%
\morphism(\xpos,\ypos)|#2|/{#5}/<#7,-#8>[\nodee`\nodea;\labelf]%
\advance\deltay by #8%
\morphism(\xpos,\ypos)|#3|/{#6}/<#7,-\deltay>[\nodee`\nodec;\labelg]%
\ignorespaces}%
\def\trident{\ifnextchar|{\tridentp}{\tridentp|amb|}}%
\def\tridentp|#1|{\ifnextchar/{\tridentpp|#1|}{\tridentpp|#1|/{>}`{>}`{>}/}}%
\def\tridentpp|#1|/#2/{\ifnextchar<{\tridentppp|#1|/#2/}%
  {\tridentppp|#1|/#2/<500,500>}}%
\def\setmorphismwidth#1#2#3#4{%
 \setbox0=\hbox{$#1{\labelstyle#3#3}#2$}#4=\wd0%
 \divide #4 by 2 \divide #4 by \ul%
 \advance #4 by 350 \ratchet{#4}{500}}%
\def\setSquarewidth[#1`#2`#3`#4;#5`#6`#7`#8]{%
 \setmorphismwidth{#1}{#2}{#5}{\topw}%
 \setmorphismwidth{#3}{#4}{#8}{\botw}%
\ratchet{\topw}{\botw}}%
\def\Squarepppp(#1)|#2|/#3/<#4>[#5]{%
 \setSquarewidth[#5]%
 \squarepppp(#1)|#2|/#3/<\topw,#4>[#5]%
\ignorespaces}%
\def\Squarep(#1){\ifnextchar|{\Squarepp(#1)}{\Squarepp(#1)|alrb|}}%
\def\Squarepp(#1)|#2|{\ifnextchar/{\Squareppp(#1)|#2|}%
    {\Squareppp(#1)|#2|/>`>`>`>/}}%
\def\Squareppp(#1)|#2|/#3`#4`#5`#6/{%
    \ifnextchar<{\Squarepppp(#1)|#2|/#3`#4`#5`#6/}%
    {\Squarepppp(#1)|#2|/#3`#4`#5`#6/<\default>}}%
\def\hsquarespppp(#1,#2)|#3|/#4/<#5>[#6;#7]{%
\Xpos=#1\Ypos=#2%
\def\next|##1##2##3##4##5##6##7|{%
 \def\Xa{##1}\def\Xb{##2}\def\Xc{##3}\def\Xd{##4}%
 \def\Xe{##5}\def\Xf{##6}\def\Xg{##7}}%
\next|#3|%
\def\next<##1,##2,##3>{\deltaX=##1\deltaXprime=##2\deltaY=##3}%
\next<#5>%
\def\next[##1`##2`##3`##4`##5`##6]{%
 \def\Nodea{##1}\def\Nodeb{##2}\def\Nodec{##3}%
 \def\Noded{##4}\def\Nodee{##5}\def\Nodef{##6}}%
\next[#6]%
\def\next[##1`##2`##3`##4`##5`##6`##7]{%
 \def\Labela{##1}\def\Labelb{##2}\def\Labelc{##3}\def\Labeld{##4}%
 \def\Labele{##5}\def\Labelf{##6}\def\Labelg{##7}}%
\next[#7]%
\dohsquares/#4/}%
\def\dohsquares/#1`#2`#3`#4`#5`#6`#7/{%
\squarepppp(\Xpos,\Ypos)|\Xa\Xc\Xd\Xf|/#1`#3`#4`#6/<\deltaX,\deltaY>%
 [\Nodea`\Nodeb`\Noded`\Nodee;\Labela`\Labelc`\Labeld`\Labelf]%
 \advance \Xpos by \deltaX%
\squarepppp(\Xpos,\Ypos)|\Xb\Xd\Xe\Xg|/#2``#5`#7/<\deltaXprime,\deltaY>%
[\Nodeb`\Nodec`\Nodee`\Nodef;\Labelb``\Labele`\Labelg]%
\ignorespaces}%
\def\hsquaresp(#1){\ifnextchar|{\hsquarespp(#1)}{\hsquarespp%
(#1)|aalmrbb|}}%
\def\hsquarespp(#1)|#2|{\ifnextchar/{\hsquaresppp(#1)|#2|}%
    {\hsquaresppp(#1)|#2|/>`>`>`>`>`>`>/}}%
\def\hsquaresppp(#1)|#2|/#3/{%
    \ifnextchar<{\hsquarespppp(#1)|#2|/#3/}%
    {\hsquarespppp(#1)|#2|/#3/<\default,\default,\default>}}%
\def\hSquarespppp(#1,#2)|#3|/#4/<#5>[#6;#7]{%
\Xpos=#1\Ypos=#2%
\def\next|##1##2##3##4##5##6##7|{%
 \def\Xa{##1}\def\Xb{##2}\def\Xc{##3}\def\Xd{##4}%
 \def\Xe{##5}\def\Xf{##6}\def\Xg{##7}}%
\next|#3|%
\deltaY=#5%
\def\next[##1`##2`##3`##4`##5`##6]{%
 \def\Nodea{##1}\def\Nodeb{##2}\def\Nodec{##3}%
 \def\Noded{##4}\def\Nodee{##5}\def\Nodef{##6}}%
\next[#6]%
\def\next[##1`##2`##3`##4`##5`##6`##7]{%
 \def\Labela{##1}\def\Labelb{##2}\def\Labelc{##3}\def\Labeld{##4}%
 \def\Labele{##5}\def\Labelf{##6}\def\Labelg{##7}}%
\next[#7]%
\dohSquares/#4/}%
\def\dohSquares/#1`#2`#3`#4`#5`#6`#7/{%
\Squarepppp(\Xpos,\Ypos)|\Xa\Xc\Xd\Xf|/#1`#3`#4`#6/<\deltaY>%
 [\Nodea`\Nodeb`\Noded`\Nodee;\Labela`\Labelc`\Labeld`\Labelf]%
 \advance \Xpos by \topw%
\Squarepppp(\Xpos,\Ypos)|\Xb\Xd\Xe\Xg|/#2``#5`#7/<\deltaY>%
[\Nodeb`\Nodec`\Nodee`\Nodef;\Labelb``\Labele`\Labelg]%
\ignorespaces}%
\def\hSquaresp(#1){\ifnextchar|{\hSquarespp(#1)}{\hSquarespp%
(#1)|aalmrbb|}}%
\def\hSquarespp(#1)|#2|{\ifnextchar/{\hSquaresppp(#1)|#2|}%
    {\hSquaresppp(#1)|#2|/>`>`>`>`>`>`>/}}%
\def\hSquaresppp(#1)|#2|/#3/{%
    \ifnextchar<{\hSquarespppp(#1)|#2|/#3/}%
    {\hSquarespppp(#1)|#2|/#3/<\default>}}%
\def\vsquarespppp(#1,#2)|#3|/#4/<#5>[#6;#7]{%
\Xpos=#1\Ypos=#2%
\def\next|##1##2##3##4##5##6##7|{%
 \def\Xa{##1}\def\Xb{##2}\def\Xc{##3}\def\Xd{##4}%
 \def\Xe{##5}\def\Xf{##6}\def\Xg{##7}}%
\next|#3|%
\def\next<##1,##2,##3>{\deltaX=##1\deltaY=##2\deltaYprime=##3}%
\next<#5>%
\def\next[##1`##2`##3`##4`##5`##6]{%
 \def\Nodea{##1}\def\Nodeb{##2}\def\Nodec{##3}%
 \def\Noded{##4}\def\Nodee{##5}\def\Nodef{##6}}%
\next[#6]%
\def\next[##1`##2`##3`##4`##5`##6`##7]{%
 \def\Labela{##1}\def\Labelb{##2}\def\Labelc{##3}\def\Labeld{##4}%
 \def\Labele{##5}\def\Labelf{##6}\def\Labelg{##7}}%
\next[#7]%
\dovsquares/#4/}%
\def\dovsquares/#1`#2`#3`#4`#5`#6`#7/{%
\squarepppp(\Xpos,\Ypos)|\Xd\Xe\Xf\Xg|/`#5`#6`#7/<\deltaX,\deltaYprime>%
[\Nodec`\Noded`\Nodee`\Nodef;`\Labele`\Labelf`\Labelg]%
 \advance\Ypos by \deltaYprime%
\squarepppp(\Xpos,\Ypos)|\Xa\Xb\Xc\Xd|/#1`#2`#3`#4/<\deltaX,\deltaY>%
 [\Nodea`\Nodeb`\Nodec`\Noded;\Labela`\Labelb`\Labelc`\Labeld]%
\ignorespaces}%
\def\vsquaresp(#1){\ifnextchar|{\vsquarespp(#1)}{\vsquarespp%
(#1)|aalmrbb|}}%
\def\vsquarespp(#1)|#2|{\ifnextchar/{\vsquaresppp(#1)|#2|}%
    {\vsquaresppp(#1)|#2|/>`>`>`>`>`>`>/}}%
\def\vsquaresppp(#1)|#2|/#3/{%
    \ifnextchar<{\vsquarespppp(#1)|#2|/#3/}%
    {\vsquarespppp(#1)|#2|/#3/<\default,\default,\default>}}%
\def\vSquarespppp(#1,#2)|#3|/#4/<#5,#6>[#7;#8]{%
\Xpos=#1\Ypos=#2%
\def\next|##1##2##3##4##5##6##7|{%
 \def\Xa{##1}\def\Xb{##2}\def\Xc{##3}\def\Xd{##4}%
 \def\Xe{##5}\def\Xf{##6}\def\Xg{##7}}%
\next|#3|%
\deltaX=#5%
\deltaY=#6%
\def\next[##1`##2`##3`##4`##5`##6]{%
 \def\Nodea{##1}\def\Nodeb{##2}\def\Nodec{##3}%
 \def\Noded{##4}\def\Nodee{##5}\def\Nodef{##6}}%
\next[#7]%
\def\next[##1`##2`##3`##4`##5`##6`##7]{%
 \def\Labela{##1}\def\Labelb{##2}\def\Labelc{##3}\def\Labeld{##4}%
 \def\Labele{##5}\def\Labelf{##6}\def\Labelg{##7}}%
\next[#8]%
\dovSquares/#4/\ignorespaces}%
\def\dovSquares/#1`#2`#3`#4`#5`#6`#7/{%
\setmorphismwidth{\Nodea}{\Nodeb}{\Labela}{\topw}%
\setmorphismwidth{\Nodec}{\Noded}{\Labeld}{\botw}%
\ratchet{\topw}{\botw}%
\setmorphismwidth{\Nodee}{\Nodef}{\Labelg}{\botw}%
\ratchet{\topw}{\botw}%
\square(\Xpos,\Ypos)|\Xd\Xe\Xf\Xg|/`#5`#6`#7/<\topw,\deltaY>%
 [\Nodec`\Noded`\Nodee`\Nodef;`\Labele`\Labelf`\Labelg]%
\advance \Ypos by \deltaY%
\square(\Xpos,\Ypos)|\Xa\Xb\Xc\Xd|/#1`#2`#3`#4/<\topw,\deltaX>%
 [\Nodea`\Nodeb`\Nodec`\Noded;\Labela`\Labelb`\Labelc`\Labeld]%
}%
\def\vSquaresp(#1){\ifnextchar|{\vSquarespp(#1)}{\vSquarespp%
(#1)|alrmlrb|}}%
\def\vSquarespp(#1)|#2|{\ifnextchar/{\vSquaresppp(#1)|#2|}%
    {\vSquaresppp(#1)|#2|/>`>`>`>`>`>`>/}}%
\def\vSquaresppp(#1)|#2|/#3/{%
    \ifnextchar<{\vSquarespppp(#1)|#2|/#3/}%
    {\vSquarespppp(#1)|#2|/#3/<\default,\default>}}%
\def\osquarepppp(#1)|#2|/#3`#4`#5`#6/<#7>[#8]{\squarepppp%
 (#1)|#2|/#3`#4`#5`#6/<#7>[#8]%
 \let\Nodea\nodea\let\Nodeb\nodeb%
\let\Nodec\nodec\let\Noded\noded\Xpos=\xpos\Ypos=\ypos%
\deltaX=\deltax \deltaY=\deltay \isquare}%
\def\osquarep(#1){\ifnextchar|{\osquarepp(#1)}{\osquarepp(#1)|alrb|}}%
\def\osquarepp(#1)|#2|{\ifnextchar/{\osquareppp(#1)|#2|}%
    {\osquareppp(#1)|#2|/>`>`>`>/}}%
\def\osquareppp(#1)|#2|/#3`#4`#5`#6/{%
    \ifnextchar<{\osquarepppp(#1)|#2|/#3`#4`#5`#6/}%
    {\osquarepppp(#1)|#2|/#3`#4`#5`#6/<1500,1500>}}%
\def\isquarepppp(#1)|#2|/#3`#4`#5`#6/<#7>[#8]{%
 \squarepppp(#1)|#2|/#3`#4`#5`#6/<#7>[#8]%
\ifnextchar|{\cubep}{\cubep|mmmm|}}%
\def\cubep|#1|{\ifnextchar/{\cubepp|#1|}{\cubepp|#1|/>`>`>`>/}}%
\def\isquare{\ifnextchar({\isquarep}{\isquarep(\default,\default)}}%
\def\isquarep(#1){\ifnextchar|{\isquarepp(#1)}{\isquarepp(#1)|alrb|}}%
\def\isquarepp(#1)|#2|{\ifnextchar/{\isquareppp(#1)|#2|}%
    {\isquareppp(#1)|#2|/>`>`>`>/}}%
\def\isquareppp(#1)|#2|/#3`#4`#5`#6/{%
    \ifnextchar<{\isquarepppp(#1)|#2|/#3`#4`#5`#6/}%
    {\isquarepppp(#1)|#2|/#3`#4`#5`#6/<500,500>}}%
\def\cubepp|#1#2#3#4|/#5`#6`#7`#8/[#9]{%
\def\next[##1`##2`##3`##4]{\gdef\Labela{##1}%
\gdef\Labelb{##2}\gdef\Labelc{##3}\gdef\Labeld{##4}}\next[#9]%
\xend\xpos \yend\ypos%
\Xend\xend\advance\Xend by -\Xpos%
\Yend\yend\advance\Yend by -\Ypos%
\domorphism(\Xpos,\Ypos)|#2|/#6/<\Xend,\Yend>[\Nodeb`\nodeb;\Labelb]%
\advance\Xpos by-\deltaX%
\advance\xend by-\deltax%
\Xend\xend\advance\Xend by -\Xpos%
\domorphism(\Xpos,\Ypos)|#1|/#5/<\Xend,\Yend>[\Nodea`\nodea;\Labela]%
\advance\Ypos by-\deltaY%
\advance\yend by-\deltay%
\Yend\yend\advance\Yend by -\Ypos%
\domorphism(\Xpos,\Ypos)|#3|/#7/<\Xend,\Yend>[\Nodec`\nodec;\Labelc]%
\advance\Xpos by\deltaX%
\advance\xend by\deltax%
\Xend\xend\advance\Xend by -\Xpos%
\domorphism(\Xpos,\Ypos)|#4|/#8/<\Xend,\Yend>[\Noded`\noded;\Labeld]%
\ignorespaces}%
\def\setwdth#1#2{\setbox0\hbox{$\labelstyle#1$}\wdth=\wd0%
\setbox0\hbox{$\labelstyle#2$}\ifnum\wdth<\wd0 \wdth=\wd0 \fi}%
\def\topppp/#1/<#2>^#3_#4{\:%
\ifnum#2=0%
   \setwdth{#3}{#4}\deltax=\wdth \divide \deltax by \ul%
   \advance \deltax by \defaultmargin  \ratchet{\deltax}{100}%
\else \deltax #2%
\fi%
\xy\ar@{#1}^{#3}_{#4}(\deltax,0) \endxy%
\:}%
\def\toppp/#1/<#2>^#3{\ifnextchar_{\topppp/#1/<#2>^{#3}}{\topppp/#1/<#2>^{#3}_{}}}%
\def\topp/#1/<#2>{\ifnextchar^{\toppp/#1/<#2>}{\toppp/#1/<#2>^{}}}%
\def\toop/#1/{\ifnextchar<{\topp/#1/}{\topp/#1/<0>}}%
\def\rlimto{{%
\font\xyatipfont=xyatip10 scaled 800
\font\xybtipfont=xybtip10 scaled 800
\raise 2pt\hbox{\,\xy\ar@{->}(100,0) \endxy}\,}}
\def\llimto{{%
\font\xyatipfont=xyatip10 scaled 800
\font\xybtipfont=xybtip10 scaled 800
\raise 2pt\hbox{\,\xy\ar@{<-}(100,0) \endxy}\,}}
\def\twopppp/#1`#2/<#3>^#4_#5{\:%
\ifnum0=#3%
  \setwdth{#4}{#5}\deltax=\wdth \divide \deltax by \ul \advance \deltax%
  by \defaultmargin \ratchet{\deltax}{200}%
\else \deltax#3 \fi%
\xy\ar@{#1}@<2.5pt>^{#4}(\deltax,0)%
\ar@{#2}@<-2.5pt>_{#5}(\deltax,0)\endxy\:}%
\def\twoppp/#1`#2/<#3>^#4{\ifnextchar_{\twopppp/#1`#2/<#3>^{#4}}%
  {\twopppp/#1`#2/<#3>^{#4}_{}}}%
\def\twopp/#1`#2/<#3>{\ifnextchar^{\twoppp/#1`#2/<#3>}{\twoppp/#1`#2/<#3>^{}}}%
\def\twop/#1`#2/{\ifnextchar<{\twopp/#1`#2/}{\twopp/#1`#2/<0>}}%
\def\threeppppp/#1`#2`#3/<#4>^#5|#6_#7{\:%
\ifnum0=#4%
\setbox0\hbox{$\labelstyle#5$}\wdth=\wd0%
\setbox0\hbox{$\labelstyle#6$}\ifnum\wdth<\wd0 \wdth=\wd0 \fi%
\setbox0\hbox{$\labelstyle#7$}\ifnum\wdth<\wd0 \wdth=\wd0 \fi%
\deltax=\wdth \divide \deltax by \ul \advance \deltax by%
\defaultmargin \ratchet{\deltax}{300}%
\else\deltax#4 \fi%
    \xy \ifnum\wd0=0 \ar@{#2}(\deltax,0)%
    \else \ar@{#2}|{#6}(\deltax,0)\fi%
\ar@{#1}@<4.5pt>^{#5}(\deltax,0)%
\ar@{#3}@<-4.5pt>_{#7}(\deltax,0)\endxy\:}%
\def\threepppp/#1`#2`#3/<#4>^#5|#6{\ifnextchar_{\threeppppp%
  /#1`#2`#3/<#4>^{#5}|{#6}}{\threeppppp/#1`#2`#3/<#4>^{#5}|{#6}_{}}}%
\def\threeppp/#1`#2`#3/<#4>^#5{\ifnextchar|{\threepppp%
  /#1`#2`#3/<#4>^{#5}}{\threepppp/#1`#2`#3/<#4>^{#5}|{}}}%
\def\threepp/#1`#2`#3/<#4>{\ifnextchar^{\threeppp/#1`#2`#3/<#4>}%
  {\threeppp/#1`#2`#3/<#4>^{}}}%
\def\threep/#1`#2`#3/{\ifnextchar<{\threepp/#1`#2`#3/}%
  {\threepp/#1`#2`#3/<0>}}%
\def\twoar(#1,#2){{%
 \scalefactor{0.1}%
 \deltax#1\deltay#2%
 \deltaX=\ifnum\deltax<0-\fi\deltax%
 \deltaY=\ifnum\deltay<0-\fi\deltay%
 \Xend\deltax \multiply \Xend by \deltax%
 \Yend\deltay \multiply \Yend by \deltay%
 \advance\Xend by \Yend \multiply \Xend by 3%
 \ifnum \deltaX > \deltaY%
    \multiply \deltaX by 3 \advance \deltaX by \deltaY%
 \else%
    \multiply \deltaY by 3 \advance \deltaX by \deltaY%
 \fi%
 \multiply\deltax by 500%
 \multiply\deltay by 500%
 \xpos\deltax \multiply \xpos by 3 \divide\xpos by \deltaX%
 \Xpos\deltax \multiply \Xpos by \deltaX \divide \Xpos by \Xend%
 \advance \xpos by \Xpos%
 \ypos\deltay \multiply \ypos by 3 \divide\ypos by \deltaX%
 \Ypos\deltay \multiply \Ypos by \deltaX \divide \Ypos by \Xend%
 \advance \ypos by \Ypos%
 \xy \ar@{=>}(\xpos,\ypos) \endxy%
}\ignorespaces}%
\def\iiixiiipppppp(#1,#2)|#3|/#4/<#5>#6<#7>[#8;#9]{%
 \xpos#1\ypos#2\relax%
 \def\next|##1##2##3##4##5##6##7|{\def\xa{##1}\def\xb{##2}%
 \def\xc{##3}\def\xd{##4}\def\xe{##5}\def\xf{##6}\nextt|##7|}%
 \def\nextt|##1##2##3##4##5##6|{\def\xg{##1}\def\xh{##2}%
 \def\xi{##3}\def\xj{##4}\def\xk{##5}\def\xl{##6}}%
 \next|#3|%
 \def\next<##1,##2>{\deltax##1\deltay##2}%
 \next<#5>%
 \def\next<##1,##2>{\deltaX##1\deltaY##2}%
 \next<#7>%
 \def\next##1{\topw##1\relax%
 \ifodd\topw \def\za{}\else\def\za{\relax}\fi \divide\topw by 2
 \ifodd\topw \def\zb{}\else\def\zb{\relax}\fi \divide\topw by 2
 \ifodd\topw \def\zc{}\else\def\zc{\relax}\fi \divide\topw by 2
 \ifodd\topw \def\zd{}\else\def\zd{\relax}\fi \divide\topw by 2
 \ifodd\topw \def\ze{}\else\def\ze{\relax}\fi \divide\topw by 2
 \ifodd\topw \def\zf{}\else\def\zf{\relax}\fi \divide\topw by 2
 \ifodd\topw \def\zg{}\else\def\zg{\relax}\fi \divide\topw by 2
 \ifodd\topw \def\zh{}\else\def\zh{\relax}\fi \divide\topw by 2
 \ifodd\topw \def\zi{}\else\def\zi{\relax}\fi \divide\topw by 2
 \ifodd\topw \def\zj{}\else\def\zj{\relax}\fi \divide\topw by 2
 \ifodd\topw \def\zk{}\else\def\zk{\relax}\fi \divide\topw by 2
 \ifodd\topw \def\zl{}\else\def\zl{\relax}\fi}%
 \next{#6}%
 \def\next[##1`##2`##3`##4`##5`##6`##7`##8`##9]{%
 \def\nodeA{##1}\def\nodeB{##2}\def\nodeC{##3}%
 \def\nodeD{##4}\def\nodeE{##5}\def\nodeF{##6}%
 \def\nodeG{##7}\def\nodeH{##8}\def\nodeI{##9}}%
 \next[#8]%
 \def\next[##1`##2`##3`##4`##5`##6`##7]{%
 \def\labela{##1}\def\labelb{##2}\def\labelc{##3}%
 \def\labeld{##4}\def\labele{##5}\def\labelf{##6}\nextt[##7]}%
 \def\nextt[##1`##2`##3`##4`##5`##6]{%
 \def\labelg{##1}\def\labelh{##2}\def\labeli{##3}%
 \def\labelj{##4}\def\labelk{##5}\def\labell{##6}}%
 \next[#9]%
 \def\next/##1`##2`##3`##4`##5`##6`##7`##8/{%
 \advance\ypos\deltay
    \ifx\zf\empty \morphism(\xpos,\ypos)/<-/<-\deltaX,0>[\nodeD`0;]\fi
 \morphism(\xpos,\ypos)|\xf|/{##6}/<\deltax,0>[\nodeD`\nodeE;\labelf]%
    \advance \xpos\deltax
    \morphism(\xpos,\ypos)|\xg|/{##7}/<\deltax,0>[\nodeE`\nodeF;\labelg]%
    \ifx\zg\empty \advance\xpos \deltax
        \morphism(\xpos,\ypos)<\deltaX,0>[\nodeF`0;]\fi
    \xpos#1 \advance\ypos\deltay
    \ifx\zd\empty \morphism(\xpos,\ypos)/<-/<-\deltaX,0>[\nodeA`0;]\fi
    \ifx\za\empty \morphism(\xpos,\ypos)/<-/<0,\deltaY>[\nodeA`0;]\fi
    \morphism(\xpos,\ypos)|\xa|/{##1}/<\deltax,0>[\nodeA`\nodeB;\labela]%
 \morphism(\xpos,\ypos)|\xc|/{##3}/<0,-\deltay>[\nodeA`\nodeD;\labelc]%
    \advance \xpos\deltax
     \morphism(\xpos,\ypos)|\xb|/{##2}/<\deltax,0>[\nodeB`\nodeC;\labelb]%
     \morphism(\xpos,\ypos)|\xd|/{##4}/<0,-\deltay>[\nodeB`\nodeE;\labeld]%
     \ifx\zb\empty \morphism(\xpos,\ypos)/<-/<0,\deltaY>[\nodeB`0;]\fi
     \advance\xpos\deltax
 \morphism(\xpos,\ypos)|\xe|/{##5}/<0,-\deltay>[\nodeC`\nodeF;\labele]%
     \ifx\zc\empty \morphism(\xpos,\ypos)/<-/<0,\deltaY>[\nodeC`0;]\fi
     \ifx\ze\empty \morphism(\xpos,\ypos)<\deltaX,0>[\nodeC`0;]\fi
   \nextt/##8/}%
 \def\nextt/##1`##2`##3`##4`##5/{%
 \xpos#1\ypos#2\relax%
   \ifx\zh\empty \morphism(\xpos,\ypos)/<-/<-\deltaX,0>[\nodeG`0;]\fi
   \ifx\zj\empty \morphism(\xpos,\ypos)<0,-\deltaY>[\nodeG`0;]\fi
   \morphism(\xpos,\ypos)|\xk|/{##4}/<\deltax,0>[\nodeG`\nodeH;\labelk]%
   \advance\xpos\deltax
   \morphism(\xpos,\ypos)|\xl|/{##5}/<\deltax,0>[\nodeH`\nodeI;\labell]%
   \ifx\zk\empty \morphism(\xpos,\ypos)<0,-\deltaY>[\nodeH`0;]\fi
   \advance\xpos\deltax
   \ifx\zi\empty \morphism(\xpos,\ypos)<\deltaX,0>[\nodeI`0;]\fi
   \ifx\zl\empty \morphism(\xpos,\ypos)<0,-\deltaY>[\nodeI`0;]\fi
   \xpos#1 \advance\ypos\deltay
    \morphism(\xpos,\ypos)|\xh|/{##1}/<0,-\deltay>[\nodeD`\nodeG;\labelh]%
    \advance \xpos\deltax
    \morphism(\xpos,\ypos)|\xi|/{##2}/<0,-\deltay>[\nodeE`\nodeH;\labeli]%
    \advance \xpos\deltax
 \morphism(\xpos,\ypos)|\xj|/{##3}/<0,-\deltay>[\nodeF`\nodeI;\labelj]}%
 \next/#4/\ignorespaces}%
\def\iiixiiip(#1){\ifnextchar|{\iiixiiipp(#1)}%
  {\iiixiiipp(#1)|aalmrmmlmrbb|}}%
\def\iiixiiipp(#1)|#2|{\ifnextchar/{\iiixiiippp(#1)|#2|}%
    {\iiixiiippp(#1)|#2|/>`>`>`>`>`>`>`>`>`>`>`>/}}%
\def\iiixiiippp(#1)|#2|/#3/{%
    \ifnextchar<{\iiixiiipppp(#1)|#2|/#3/}%
    {\iiixiiipppp(#1)|#2|/#3/<\default,\default>}}%
\def\iiixiiipppp(#1)|#2|/#3/<#4>{\ifnextchar[{\iiixiiippppp(#1)|#2|/#3/%
   <#4>0<0,0>}{\iiixiiippppp(#1)|#2|/#3/<#4>}}%
\def\iiixiiippppp(#1)|#2|/#3/<#4>#5{\ifnextchar<%
   {\iiixiiipppppp(#1)|#2|/#3/<#4>{#5}}%
   {\iiixiiipppppp(#1)|#2|/#3/<#4>{#5}<400,400>}}%
\def\iiixiipppppp(#1,#2)|#3|/#4/<#5>#6<#7>[#8;#9]{%
 \xpos#1\ypos#2\relax%
 \def\next|##1##2##3##4##5##6##7|{\def\xa{##1}\def\xb{##2}%
 \def\xc{##3}\def\xd{##4}\def\xe{##5}\def\xf{##6}\def\xg{##7}}%
 \next|#3|%
 \def\next<##1,##2>{\deltax##1\deltay##2}%
 \next<#5>%
 \deltaX#7
 \topw#6
 \def\next{%
 \ifodd\topw \def\za{}\else\def\za{\relax}\fi \divide\topw by 2
 \ifodd\topw \def\zb{}\else\def\zb{\relax}\fi \divide\topw by 2
 \ifodd\topw \def\zc{}\else\def\zc{\relax}\fi \divide\topw by 2
 \ifodd\topw \def\zd{}\else\def\zd{\relax}\fi}%
 \next%
 \def\next[##1`##2`##3`##4`##5`##6]{%
 \def\nodea{##1}\def\nodeb{##2}\def\nodec{##3}%
 \def\noded{##4}\def\nodee{##5}\def\nodef{##6}}%
 \next[#8]%
 \def\next[##1`##2`##3`##4`##5`##6`##7]{%
 \def\labela{##1}\def\labelb{##2}\def\labelc{##3}%
 \def\labeld{##4}\def\labele{##5}\def\labelf{##6}\def\labelg{##7}}%
 \next[#9]%
 \def\next/##1`##2`##3`##4`##5`##6`##7/{%
 {\ifx\zc\empty\advance\xpos -\deltaX
\relax\morphism(\xpos,\ypos)<\deltaX,0>[0`\noded;]\fi}%
 \morphism(\xpos,\ypos)|\xf|/##6/<\deltax,0>[\noded`\nodee;\labelf]%
 \advance\xpos by \deltax%
 \morphism(\xpos,\ypos)|\xg|/##7/<\deltax,0>[\nodee`\nodef;\labelg]%
 {\ifx\zd\empty \advance\xpos by \deltax
\relax  \morphism(\xpos,\ypos)<\deltaX,0>[\nodef`0;]\fi}%
 \advance\xpos by -\deltax  \advance\ypos by \deltay
 {\ifx\za\empty\advance \xpos by -\deltaX
\relax\morphism(\xpos,\ypos)<\deltaX,0>[0`\nodea;]\fi}%
 \morphism(\xpos,\ypos)|\xa|/##1/<\deltax,0>[\nodea`\nodeb;\labela]%
 \morphism(\xpos,\ypos)|\xc|/##3/<0,-\deltay>[\nodea`\noded;\labelc]%
 \advance\xpos by \deltax%
 \morphism(\xpos,\ypos)|\xb|/##2/<\deltax,0>[\nodeb`\nodec;\labelb]%
 \morphism(\xpos,\ypos)|\xd|/##4/<0,-\deltay>[\nodeb`\nodee;\labeld]%
 \advance\xpos by \deltax%
 \morphism(\xpos,\ypos)|\xe|/##5/<0,-\deltay>[\nodec`\nodef;\labele]%
 \ifx\zb\empty\relax \morphism(\xpos,\ypos)<\deltaX,0>[\nodec`0;]\fi}%
 \next/#4/\ignorespaces}%
\def\iiixiip(#1){\ifnextchar|{\iiixiipp(#1)}%
  {\iiixiipp(#1)|aalmrbb|}}%
\def\iiixiipp(#1)|#2|{\ifnextchar/{\iiixiippp(#1)|#2|}%
    {\iiixiippp(#1)|#2|/>`>`>`>`>`>`>/}}%
\def\iiixiippp(#1)|#2|/#3/{%
    \ifnextchar<{\iiixiipppp(#1)|#2|/#3/}%
    {\iiixiipppp(#1)|#2|/#3/<\default,\default>}}%
\def\iiixiipppp(#1)|#2|/#3/<#4>{\ifnextchar[{\iiixiippppp(#1)|#2|/#3/%
   <#4>{0}<0>}{\iiixiippppp(#1)|#2|/#3/<#4>}}%
\def\iiixiippppp(#1)|#2|/#3/<#4>#5{\ifnextchar<%
   {\iiixiipppppp(#1)|#2|/#3/<#4>{#5}}%
   {\iiixiipppppp(#1)|#2|/#3/<#4>{#5}<400>}}%
\def\node#1(#2,#3)[#4]{%
\expandafter\gdef\csname x@#1\endcsname{#2}%
\expandafter\gdef\csname y@#1\endcsname{#3}%
\expandafter\gdef\csname ob@#1\endcsname{#4}%
\ignorespaces}%
\def\arrow{\ifnextchar|{\arrowp}{\arrowp|a|}}%
\def\arrowp|#1|{\ifnextchar/{\arrowpp|#1|}{\arrowpp|#1|/>/}}%
\def\arrowpp|#1|/#2/[#3`#4;#5]{%
\xfinish=\csname x@#4\endcsname%
\yfinish=\csname y@#4\endcsname%
\advance\xfinish by -\csname x@#3\endcsname%
\advance\yfinish by -\csname y@#3\endcsname%
\morphism(\csname x@#3\endcsname,\csname y@#3\endcsname)|#1|/{#2}/%
<\xfinish,\yfinish>[\csname ob@#3\endcsname`\csname ob@#4\endcsname;#5]%
}%
\def\Loop(#1,#2)#3(#4,#5){\POS(#1,#2)*+!!<0ex,\axis>{#3}\ar@(#4,#5)}%
\def\iloop#1(#2,#3){\xy\Loop(0,0)#1(#2,#3)\endxy}%
     \let \PATHafterPOS\PATHafterPOS@default%
     \let \arsavedPATHafterPOS@@\relax%
     \let\afterar@@\relax%
\xydef@\endxyobj{\if\inxy@\else\xyerror@{Unexpected \string\endxy}{}\fi%
>  \relax%
>   \dimen@=\Y@max \advance\dimen@-\Y@min%
>   \ifdim\dimen@<\z@ \dimen@=\z@ \Y@min=\z@ \Y@max=\z@ \fi%
>   \dimen@=\X@max \advance\dimen@-\X@min%
>   \ifdim\dimen@<\z@ \dimen@=\z@ \X@min=\z@ \X@max=\z@ \fi%
>   \edef\tmp@{\egroup%
>     \setboxz@h{\kern-\the\X@min \boxz@}%
>     \ht\z@=\the\Y@max \dp\z@=-\the\Y@min \wdz@=\the\dimen@%
>     \noexpand\maybeunraise@ \raise\dimen@\boxz@%
>     \noexpand\recoverXyStyle@ \egroup \noexpand\xy@end%
>     \U@c=\the\Y@max \advance\U@c-\the\Y@c%
>     \D@c=-\the\Y@min \advance\D@c\the\Y@c%
>     \L@c=-\the\X@min  \advance\L@c\the\X@c%
>     \R@c=\the\X@max  \advance\R@c-\the\X@c%
>    }\tmp@}%
\gdef\xymerge@MinMax{}%
\xydef@\twocell{\hbox\bgroup\xysave@MinMax\@twocell}%
\xydef@\uppertwocell{\hbox\bgroup\xysave@MinMax\@uppertwocell}%
\xydef@\lowertwocell{\hbox\bgroup\xysave@MinMax\@lowertwocell}%
\xydef@\compositemap{\hbox\bgroup\xysave@MinMax\@compositemap}%
\xydef@\xysave@MinMax{\xdef\xymerge@MinMax{%
   \noexpand\ifdim\X@max<\the\X@max \X@max=\the\X@max\noexpand\fi%
   \noexpand\ifdim\X@min>\the\X@min \X@min=\the\X@min\noexpand\fi%
   \noexpand\ifdim\Y@max<\the\Y@max \Y@max=\the\Y@max\noexpand\fi%
   \noexpand\ifdim\Y@min>\the\Y@min \Y@min=\the\Y@min\noexpand\fi%
  }}%
\xydef@\drop@Twocell{\boxz@ \xymerge@MinMax}%
\xydef@\twocell@DONE{%
  \edef\tmp@{\egroup%
   \X@min=\the\X@min \X@max=\the\X@max%
   \Y@min=\the\Y@min \Y@max=\the\Y@max}\tmp@%
  \L@c=\X@c \advance\L@c-\X@min \R@c=\X@max \advance\R@c-\X@c%
  \D@c=\Y@c \advance\D@c-\Y@min \U@c=\Y@max \advance\U@c-\Y@c%
  \ht\z@=\U@c \dp\z@=\D@c \dimen@=\L@c \advance\dimen@\R@c \wdz@=\dimen@%
  \computeLeftUpness@%
  \setboxz@h{\kern-\X@p \raise-\Y@c\boxz@ }%
  \dimen@=\L@c \advance\dimen@\R@c \wdz@=\dimen@ \ht\z@=\U@c \dp\z@=\D@c%
  \Edge@c={\rectangleEdge}\Invisible@false \Hidden@false%
  \edef\Drop@@{\noexpand\drop@Twocell%
   \noexpand\def\noexpand\Leftness@{\Leftness@}%
   \noexpand\def\noexpand\Upness@{\Upness@}}%
  \edef\Connect@@{\noexpand\connect@Twocell%
   \noexpand\ifdim\X@max<\the\X@max \X@max=\the\X@max\noexpand\fi%
   \noexpand\ifdim\X@min>\the\X@min \X@min=\the\X@min\noexpand\fi%
   \noexpand\ifdim\Y@max<\the\Y@max \Y@max=\the\Y@max\noexpand\fi%
   \noexpand\ifdim\Y@min>\the\Y@min \Y@min=\the\Y@min\noexpand\fi }%
  \xymerge@MinMax%
}%
\newtheorem{theorem}{Theorem}[section]
\newtheorem{lemma}[theorem]{Lemma}
\theoremstyle{remark}
\newtheorem{remark}[theorem]{Remark}
\theoremstyle{definition}
\newtheorem{example}[theorem]{Example}
\newtheorem{definition}[theorem]{Definition}
\newcommand{\twoset}{\mathbbm 2}
\newcommand{\R}{\mathbb R}
\newcommand{\too}{\longrightarrow}
\newcommand{\oot}{\longleftarrow}
\newcommand{\up}{\uparrow}
\newcommand{\dn}{\downarrow}
\newcommand{\ising}{\textsc{Ising}}
\newcommand{\Ising}{\mathbf{Ising}}
\newcommand{\Set}{\mathbf{Set}}
\newcommand{\Cospan}{\mathbf{Cospan}}
\newcommand{\C}{\mathbf{C}}
\newcommand{\knuth}[2]{\tensor*[_{#1}^{#2}]{\mathbb K}{}}
\author{Micah Blake McCurdy, Jeffrey Egger, Jordan Kyriakidis \\ \small \emph{Department of Physics and Atmospheric Science, Dalhousie University, Nova Scotia, Canada.}}
\date{27 August 2013}	
\title{Decomposition and Gluing for Adiabatic Quantum Optimization}
\begin{document}

\maketitle

\begin{abstract} 
\noindent Farhi and others~\cite{Farhi} have introduced the notion of solving NP problems using adiabatic quantum computers. We discuss an application of this idea to
the problem of integer factorization, together with a technique we call \emph{gluing} which can be used to build adiabatic models of interesting problems.
Although adiabatic quantum computers already exist, they are likely to be too small to directly tackle problems of interesting practical sizes for the foreseeable future. Therefore, we
discuss techniques for decomposition of large problems, which permits us to fully exploit such hardware as may be available. Numerical results suggest that even simple decomposition techniques may
yield acceptable results with subexponential overhead, independent of the performance of the underlying device.
\end{abstract}

\let\labelstyle\textstyle	
\section{Introduction}

Adiabatic quantum computing (AQC) has been suggested by Farhi and others~\cite{Farhi} as a novel method of computation, building on earlier works concerning quantum annealing such as
\cite{Finnila} and \cite{Kadowaki}. The central idea is that of the adiabatic theorem\footnote{For a host of useful
recent references, see References~27-37 of Choi~\cite{ChoiAvoid}}, which implies that sufficiently slowly
varying quantum systems can be maintained in their ground states. This theorem can be used to underpin computation by smoothly varying a quantum system, beginning with a system (the ``initial Hamiltonian'') 
with an easily prepared ground state and ending with a system (the ``problem Hamiltonian'') whose ground state encodes the problem of interest. In this paper we will restrict ourselves to problem Hamiltonians
which are \emph{classical}, that is, which are diagonal with respect to the measurement basis used to obtain one's results; this restriction is sometimes referred to as adiabatic quantum optimisation (AQO).
The efficiency of both AQC and AQO depend sensitively on the evolution path chosen from initial to problem Hamiltonian; this path is never restricted to classical Hamiltonians even in the case of AQO. 
We shall not further discuss evolution paths in this paper, and we refer interested readers to~\cite{FarhiPaths}, to~\cite{NavidMasters}, and to~\cite{Egger}.

Broadly speaking, the encoding of a given algorithm in a form suitable for AQO translates the time-complexity of the algorithm into the space complexity of the problem Hamiltonian. This fastens our attention
on \emph{NP problems}, which are precisely those problems for which putative solutions can be verified in polynomial time. Since minimization is (mathematically) atemporal with respect to the original algorithm, 
we can attack NP problems by minimizing problem Hamiltonians associated to these verification algorithms. Thus, we can attack integer factoring through integer multiplication, satisfiability through basic 
logic gates, and subset-sum through weighted addition, to give three simple examples. This possibility accounts for much of the excitement surrounding AQC and AQO. Putative AQO devices have their own
temporal behaviour, which we do not discuss here; we merely highlight the crucual notion that the classical time complexity is translated into the adiabatic space complexity.

In the present paper, we do not address the performance or efficiency of any putative adiabatic device---we concern ourselves instead with two more quotidian tasks.
First, we discuss the encoding of classical problems as the ground states of classical Hamiltonians, in a comprehensive, self-contained fashion. We establish the key
lemma, which we call ``The Gluing Lemma'' which permits us to build up complex problems from simple ones. We illustrate this process by showing how to build a Hamiltonian whose ground state encodes integer
factoring. Second, we discuss \emph{decomposition} techniques which permit us to use adiabatic quantum hardware of a fixed size to solve problems of a larger size. It is not clear just how much ``overhead''
this decomposition entails, over and above the running time of a given adiabatic device (which, we reiterate, we do not discuss), but we present computational results which suggest it need not be exponential.

\section{Gluing of Ising Networks}

We rehearse some basic definitions to fix notation. First, let us write $\twoset = \{\up,\dn\}$ for the two element set whose elements will be known as ``up'' and ``down'', respectively, and let us write
$\twoset^X$ for the set of functions from a set $X$ to $\twoset$, that is, an assignment of up or down to every element of $X$. Given a function $s \colon X \too \twoset$ and a subset $I \too X$, we write
$\left.s\right|_I$ for the obvious restriction of $s$ to a function $I \too \twoset$; similarly, if $s_I \colon I \too \twoset$ and $s_J \colon J \too \twoset$ are two functions then we write
$\left<s_I,s_J\right>$ for the obvious function from the disjoint union $I \cup J$ to $\twoset$.

\begin{definition}[Ising nets]
	An \emph{Ising net} $\mathbb I$ over $\R$ consists of a set of vertices $I$ and an \emph{energy function} $E_I \colon \twoset^I \too \R$ which associates to every configuration of the network its energy. 
A \emph{ground state} of an Ising Net is a configuration $s \colon I \too \twoset$ for which $E_I(s)$ is minimal; note that ground states may or may not be unique. We write $E_{\mathbb I}^0$ for
the energy value of the ground state.
\end{definition}

We shall be chiefly interested in Ising nets which are \emph{2-local}, that is:
\begin{definition}[2-locality]
	Let $\mathbb I = (I,E_{\mathbb I} \colon \twoset^I \too \R)$ be an Ising net and let $s \colon I \too \twoset$ be a configuration of $\mathbb I$.
	Define a function from $\twoset$ to $\R$ by ${\dn} \mapsto -1$ and ${\up} \mapsto +1$, and let us write $\hat s$ for the composition of $s$ with this function.

	Then $\mathbb I$ is said to be \emph{2-local} if $E_{\mathbb I}$ can be written in the form: 
	\[ E_{\mathbb I}(s) = \sum_{i,j \in I} \beta_{i,j}\hat s(i) \hat s(j) + \sum_{i \in I} \alpha_i \hat s(i) + \gamma \] for some $\gamma$, $\alpha_i$,
	and $\beta_{i,j}$ in $\R$. Note that it is assumed that the first summation is taken over all unordered, distinct pairs of elements in $I$.
\end{definition}

We will frequently render 2-local Ising nets in a handy graphical manner. 
\begin{example}[\textsc{And} Gate] \label{and-gate}
Consider the Ising net $\mathbb A = (\{a',b',c'\},E_{\mathbb A})$ where
$E_{\mathbb A}(a'\mapsto a,b'\mapsto b,c'\mapsto c) = -a -b + 2c + ab - 2c(a+b)$, which is clearly 2-local. We depict $\mathbb A$ as:
\[ \bfig

\scalefactor{0.5}

\node A(-1000,0)[a]
\node B(+1000,0)[b]
\node C(0,-1000)[c]

\place(-1200,+200)[-1]
\place(+1200,+200)[-1]
\place(0,-1200)[2]

\arrow|a|/-/[A`B;\displaystyle 1]
\arrow|b|/-/[A`C;\displaystyle -2]
\arrow|b|/-/[B`C;\displaystyle -2]

\efig \]
The coefficients $\alpha$ appear as the labels on the vertices of this graph, and the coefficients $\beta$ appear as labels on edges. Spins $i,j$ for which $\beta_{i,j} = 0$ are not joined.

If we interpret $\up$ as ``true'' and $\dn$ as ``false'', then this net has a ground state which encodes the graph of $c = a \textsc{ and } b$. The full graph of $E_{\mathbb A}$ is
\[ \begin{tabular}[h]{ccc|c}
a & b & c & $E_{\mathbb A}$ \\ \hline
$\dn$ &$\dn$ &$\dn$ & -3 \\
$\dn$ &$\up$ &$\dn$ & -3 \\
$\up$ &$\dn$ &$\dn$ & -3 \\
$\up$ &$\up$ &$\up$ & -3 \\ \hline
$\dn$ &$\up$ &$\up$ & 1 \\
$\up$ &$\dn$ &$\up$ & 1 \\
$\up$ &$\up$ &$\dn$ & 1 \\
$\dn$ &$\dn$ &$\up$ & 9
\end{tabular} \]
\end{example}

Part of the interest in 2-local Ising nets is that, under some additional geometric restrictions, they model the class of systems for which the adiabatic quantum computer produced by D-Wave systems is suited.

We first establish the key technical lemma which we use to join together simple arithmetic operations into networks capable of computing non-trivial functions.

\begin{definition}[Gluing of Ising nets]
Let $\mathbb I = (I,E_{\mathbb I} \colon \twoset^I \too \R)$ and $\mathbb J = (J,E_{\mathbb J} \colon \twoset^J \too \R)$ be two Ising nets, and suppose that we have a pair of set inclusions $I \oot T \too J$
which describe an intersection of the sets of spins underlying the two Ising networks. Consider the set $I \underset{T}{+} J$ defined to be the union of the sets $I$ and $J$, presumed to be disjoint except for the overlap $T$. We define
the \emph{gluing of $\mathbb I$ and $\mathbb J$ along $T$} to be $\mathbb I \underset{T}{+} \mathbb J = (I \underset{T}{+} J,E_{\mathbb I \underset{T}{+} \mathbb J} \colon \twoset^X \too \R)$ by setting 
$E_{\mathbb I \underset{T}{+} \mathbb J}(s) = E_{\mathbb I}(\left.s\right|_I) + E_{\mathbb J}(\left.s\right|_J)$.
\end{definition}

The gluing of two 2-local Ising nets is again 2-local, in a very simple way; we identify the indicated spins, obtain new coefficients $\alpha$ by adding the relevant $\alpha$ coefficients, and obtain new $\beta$ values by
adding the relevant $\beta$ values. For example, the gluing of:
\[ \bfig

\scalefactor{0.5}

\node A(-1000,0)[a]
\node B(+1000,0)[b]
\node C(0,-1000)[c]

\place(-1200,+200)[-1]
\place(+1200,+200)[-1]
\place(0,-1200)[2]

\arrow|a|/-/[A`B;1]
\arrow|b|/-/[A`C;-2]
\arrow|b|/-/[B`C;-2]

\place(+2000,-700)[\textrm{and}]

\node A(+3000,0)[b]
\node B(+5000,0)[d]
\node C(+4000,-1000)[c]

\place(+2800,+200)[-1]
\place(+5200,+200)[-1]
\place(+4000,-1200)[2]

\arrow|a|/-/[A`B;1]
\arrow|b|/-/[A`C;-2]
\arrow|b|/-/[B`C;-2]

\efig \]
along the set $\{b,c\}$ is 
\[ \bfig

\node A(-1000,0)[a]
\node B(0,0)[b]
\node C(0,-1000)[c]
\node D(+1000,0)[d]

\place(-1200,+200)[-1]
\place(0,+200)[-2]
\place(0,-1200)[4]
\place(+1200,+200)[-1]

\arrow|a|/-/[A`B;1]
\arrow|a|/-/[D`B;1]
\arrow|b|/-/[A`C;-2]
\arrow|m|/-/[B`C;-4]
\arrow|b|/-/[D`C;-2]

\efig \]

\begin{lemma}[The Gluing Lemma for Ising Networks] \label{lem-gluing}
Let $\mathbb X = \mathbb I \underset{T}{+} \mathbb J$ be the gluing of $\mathbb I$ and $\mathbb J$ along $I \too T \oot J$ as in the previous definition.
If there exists a ground state configuration $\phi_0 \colon I \too \twoset$ of $I$ and a ground state configuration $\psi_0 \colon J \too \twoset$ of $J$ which agree on the intersection $T$, then the ground state of
$\mathbb X$ is precisely those configurations whose restrictions to $\mathbb I$ and $\mathbb J$ are each ground states of those nets.
\end{lemma}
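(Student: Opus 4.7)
The plan is to show that $E_{\mathbb{X}}^0 = E_{\mathbb{I}}^0 + E_{\mathbb{J}}^0$ under the stated hypothesis, and then use this exact equality to characterize the ground states. First I would observe that the defining formula $E_{\mathbb X}(s) = E_{\mathbb I}(s|_I) + E_{\mathbb J}(s|_J)$ immediately yields the elementary lower bound
\[ E_{\mathbb X}(s) \geq E_{\mathbb I}^0 + E_{\mathbb J}^0 \]
for every configuration $s \colon I \underset{T}{+} J \too \twoset$, since each summand is bounded below by the ground state energy of its respective net. This bound holds unconditionally and uses nothing about $T$.

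Next I would use the hypothesis to show the bound is attained. Since $\phi_0$ and $\psi_0$ agree on $T$, the function $s_0 = \langle \phi_0, \psi_0 \rangle$ is a well-defined configuration on $I \underset{T}{+} J$ whose restrictions are $\phi_0$ and $\psi_0$ respectively, so $E_{\mathbb X}(s_0) = E_{\mathbb I}^0 + E_{\mathbb J}^0$. This establishes that the infimum in the previous inequality is achieved, so $E_{\mathbb X}^0 = E_{\mathbb I}^0 + E_{\mathbb J}^0$ exactly.

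With the ground state energy identified, the characterization of ground states is a simple pinching argument: a configuration $s$ satisfies $E_{\mathbb X}(s) = E_{\mathbb X}^0$ if and only if $E_{\mathbb I}(s|_I) + E_{\mathbb J}(s|_J) = E_{\mathbb I}^0 + E_{\mathbb J}^0$. Since each summand individually exceeds its ground state minimum, equality in the sum forces equality in each summand, so $s|_I$ and $s|_J$ must each be ground states. Conversely, any $s$ with this property automatically has its restrictions agreeing on $T$ (they both equal $s|_T$), so the gluing is well-formed and its energy equals $E_{\mathbb X}^0$.

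The only nontrivial content is the compatibility hypothesis on $\phi_0$ and $\psi_0$: without it, the individually optimal restrictions might never be simultaneously realizable, in which case $E_{\mathbb X}^0$ could strictly exceed $E_{\mathbb I}^0 + E_{\mathbb J}^0$ and the characterization would fail. So the main conceptual point to flag is that the hypothesis is what prevents frustration along the shared spins $T$; the rest is a one-line inequality chain.
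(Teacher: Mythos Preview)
Your proof is correct and follows essentially the same approach as the paper: both use the additive formula $E_{\mathbb X}(s) = E_{\mathbb I}(s|_I) + E_{\mathbb J}(s|_J)$ to get the lower bound $E_{\mathbb X}^0 \geq E_{\mathbb I}^0 + E_{\mathbb J}^0$, invoke the compatibility hypothesis to exhibit a state attaining it, and then deduce that any ground state of $\mathbb X$ must restrict to ground states of each piece. The only cosmetic difference is that the paper phrases the last step as a proof by contradiction whereas you phrase it as a pinching argument; the content is identical.
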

\begin{proof}
First, consider a state $s$ of $\mathbb X$ for which $\left.s\right|_I$ is a ground state of $\mathbb I$ and for which $\left.s\right|_J$ is a ground state of $\mathbb J$---at least one such state exists by
hypothesis. To see that $s$ is a ground state of $\mathbb X$, consider another state $t$; we compute: 
\[ E_{\mathbb X}(t) = E_{\mathbb I}(\left.t\right|_I) + E_{\mathbb J}(\left.t\right|_J) \geq E_{\mathbb I}(\left.s\right|_I) + E_{\mathbb J}(\left.s\right|_J) = E_{\mathbb X}(s) \] Note that this implies that 
$E_{\mathbb X}^0 = E_{\mathbb I}^0 + E_{\mathbb J}^0$.

Conversely, suppose that $u$ is a ground state of $\mathbb X$; we must show that $\left.u\right|_I$ is a ground state of $\mathbb I$ and that $\left.u\right|_J$ is a ground state of $\mathbb J$. Suppose, for a contradiction,
that one of these is false, without loss of generality, let us suppose that $E_{\mathbb I}(\left.u\right|_I)> E_{\mathbb I}^0$, then we compute:
\[ E_{\mathbb X}(u) = E_{\mathbb I}(\left.u\right|_I) + E_{\mathbb J}(\left.u\right|_J) > E_{\mathbb I}^0 + E_{\mathbb J}(\left.u\right|_J) \geq E_{\mathbb I}^0 + E_{\mathbb J}^0 = E_{\mathbb X}^0 \] contradicting the 
assumption that $u$ is a ground state of $\mathbb X$.
\end{proof}

The assumption that $\mathbb I$ and $\mathbb J$ should have a mutually compatible ground state is, of course, necessary in the above theorem. To apply the Gluing Lemma in practice, one must build one's nets carefully,
as in our example of factoring nets in the sequel, ensuring that this condition is satisfied at all times. Of course, this is not always possible; however, the Gluing Lemma is still helpful in this case. An easy
corollary of the lemma is that $E_{\mathbb I \underset{T}{+} \mathbb J}^0 \geq E_{\mathbb I}^0 + E_{\mathbb J}^0$, with equality if and only if $\mathbb I$ and $\mathbb J$ are compatible. Hence, if an (ideal)
adiabatic evolution of a glued system gives a higher energy than expected, one can deduce that the nets in question are not compatible. The meaning of this incompatibility will vary according to circumstance,
for instance, it could mean that a given satisfiability statement is unsatisfiable, that a given number is not factorizable with factors of the desired size, or that a given set of numbers does not have a zero-sum
subset, for instance.

\subsection{Clamping}

\begin{definition}
Let $\mathbb I = (I,E_{\mathbb I} \colon \twoset^I \too \R)$ be an Ising net, and let $s \colon S \too \twoset$ be a configuration associated to a subset $S \too I$. Then the \emph{clamping of $\mathbb I$ along
$s$} is another Ising net, which we write $c_s(\mathbb I) = (I\backslash S,E_{c_s(\mathbb I)} \colon \twoset^{I\backslash S} \too \R)$. The energy function for the clamping of $\mathbb I$ along $s$ is defined by:
\[ E_{c_s(\mathbb I)}(t) = E_{\mathbb I}(\left<s,t\right>) \] for a configuration $t \colon I\backslash S \too \R$.
\end{definition}

It is straightforward but pleasing to verify that the clamping of a 2-local Ising net is again 2-local.
Ising nets can be used to model computations in the following manner. 
\begin{definition}[Programs on Ising nets] \label{def-program}
Let $\mathbb I = (I,E_{\mathbb I} \colon \twoset^I \too \R)$ be an Ising net. A \emph{program} on $\mathbb I$ is an ordered pair of subsets, $(S,T)$ of $I$. 
We think of the set $S$ as the type of ``input'' of the program and the set $T$ as the type of the ``output'' of the program.
Note that, for technical convenience, we do not assume that $S$ and $T$ are disjoint, although in most programs this will be the case.
\end{definition}

One virtue of this approach to computation is that is \emph{atemporal}, that is, the choices of ``source'' vertices $S$ and ``target'' vertices $T$ is completely arbitrary. For example, consider the \textsc{and} gate
from Example~\ref{and-gate}. Setting $S = \{a,b\}$ and $T = \{c\}$, we can compute the logical \textsc{and} of $a$ and $b$. Conversely, setting $S = \{c\}$ and $T =\{a,b\}$, we can compute the set of pairs $a,b$
for which $a \textsc{ and } b$ equals the given value of $c$.

\begin{definition}[Executions of programs on Ising nets]
Let $(S,T)$ be a program on an Ising net $\mathbb I$. 
An \emph{execution of the program $(S,T)$} is the following procedure: Given a configuration $s \colon S \too \twoset$ (that is, input); obtain the clamping $c_s(\mathbb I)$. Minimizing $E_{c_s(\mathbb I)}$
produces a set of configurations of the form $t \colon I\backslash S \too \twoset$, combining these with the given $s$ produces a set of configurations of $I$, from which configurations of $T$ may be extracted, such
configurations are the output of the execution.

\end{definition}
\begin{example}[The Full Multiplier] \label{Knuth-unit} Consider the net displayed in Figure~\ref{fig-full-multiplier}. Its ground state encodes (as the reader may verify) the graph of the relation $ab + c + d = 2e + f$.
The program $(\{a,b,c,d\},\{e,f\})$ defined on this Ising net takes as input a quadruple of binary values $a,b,c,d$ and computes from them the expression $ab+c+d$, rendered as the singleton set $\{e,f\}$ of the digits
of this number in binary. On the other hand, the program $(\{c,d,e,f\},\{a,b\})$ defined on this net will take a quadruple of binary values $c,d,e,f$ and produce the set of all pairs $(a,b)$ whose product $ab = 2e+f-c-d$.
Of especial interest is the program $(\{e,f\},\{a,b,c,d\})$ which is the ``time reversal'' of the first program, where we infer quadruples $a,b,c,d$ for which $ab+c+d$ is the same as the number $(ef)_2$. The reader may
verify that the ground state energy of this net is $-15$.
\end{example}

\begin{figure}[h!]
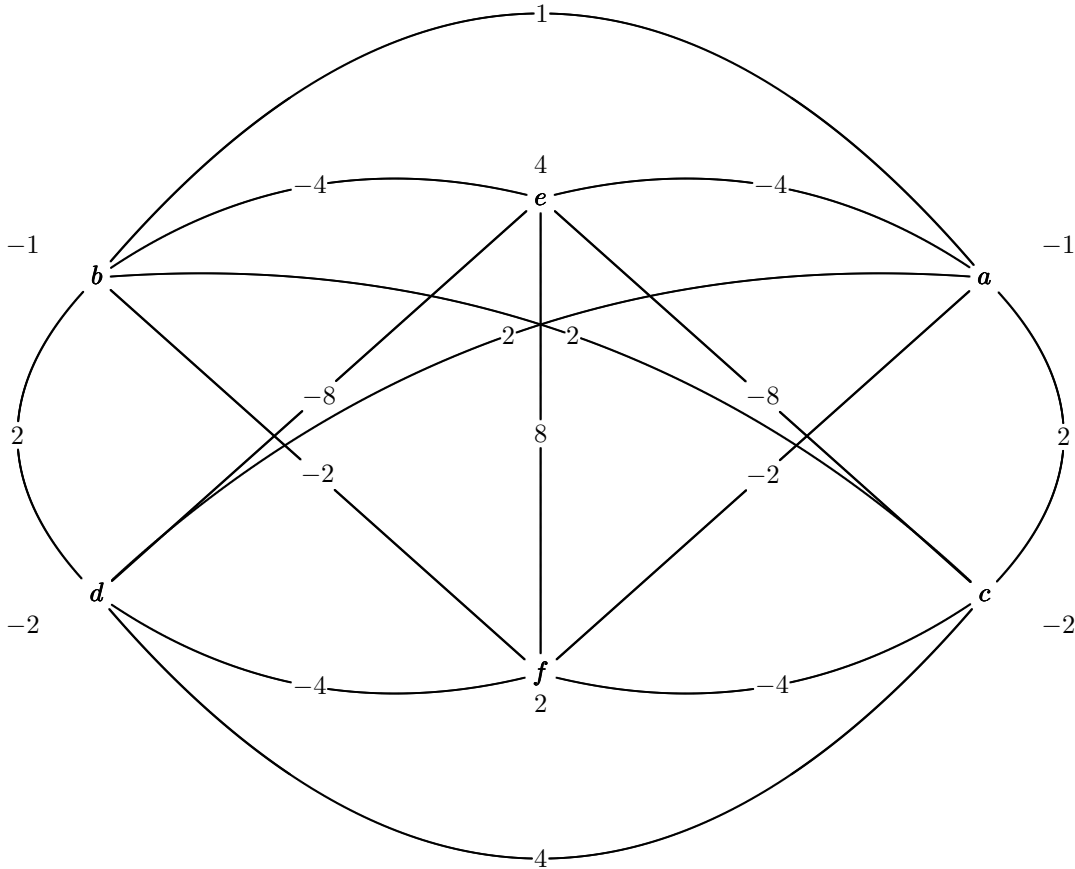

\[ \bfig

\hscalefactor{1.4}
\vscalefactor{0.6}

\node 3(0,+1500)[e]
\node 0(0,-1500)[f]
\node 2(-1200,+1000)[b]
\node 1(+1200,+1000)[a]
\node 5(+1200,-1000)[c]
\node 4(-1200,-1000)[d]

\place(0,+1700)[4]
\place(0,-1700)[2]
\place(-1400,+1200)[-1]
\place(+1400,+1200)[-1]
\place(+1400,-1200)[-2]
\place(-1400,-1200)[-2]

\arrow|m|/{@{-}@/_0em/}/[0`1;-2]
\arrow|m|/{@{-}@/^0em/}/[0`2;-2]
\arrow|m|/{@{-}@/_2em/}/[0`5;-4]
\arrow|m|/{@{-}@/^2em/}/[0`4;-4]

\arrow|m|/{@{-}@/_0em/}/[3`0;8]

\arrow|m|/{@{-}@/_2em/}/[3`2;-4]
\arrow|m|/{@{-}@/^2em/}/[3`1;-4]
\arrow|m|/{@{-}@/_0em/}/[3`4;-8]
\arrow|m|/{@{-}@/^0em/}/[3`5;-8]

\arrow|m|/{@{-}@/_10em/}/[1`2;1]
\arrow|m|/{@{-}@/_10em/}/[4`5;4]

\arrow|m|/{@{-}@/^3em/}/[1`5;2]
\arrow|m|/{@{-}@/^3em/}/[4`2;2]

\arrow|m|/{@{-}@/^4em/}/[2`5;2]
\arrow|m|/{@{-}@/_4em/}/[1`4;2]

\efig \]
\caption{The Full Multiplier. This net has a ground state which encodes the graph of the equation $ab + c + d = 2e + f$}
\label{fig-full-multiplier}
\end{figure}


If we are to solve serious problems in this way we must have a method for building up non-trivial programs from simple ones, we do this with the Gluing~Lemma.
\begin{definition}[Composition of programs] \label{def-composition}
Let $(S,T)$ be a program on $\mathbb I$ and let $(T,U)$ be a program on $\mathbb J$. We say that these two programs are \emph{compatible} if there is a ground state of $\mathbb I$ and a ground state of $\mathbb J$ whose
restrictions to $T$ are equal. In this case, we define the \emph{composition} of these two programs to be $(S,U)$ on the gluing $\mathbb I \underset{T}{+} \mathbb J$ along $T$.
\end{definition}

The categorically-minded reader will no doubt have detected that this composition is reminiscent of the composition in a suitable bicategory of cospans of sets. We will not have much to do with this categorical structure, 
but the interested reader may consult the Appendix.

Since the composition of two programs is only defined when they are compatible, we see from the Gluing~Lemma that executions of composite program behave as we expect, that is, producing output in the form of states which
simultaneously satisfy both programs.

\section{Factoring}

With the general framework of the previous section in hand, we apply these concepts to building a net whose ground state encodes integer factoring.

The Ising net $\mathbb K$ in Example~\ref{Knuth-unit} encodes the graph of the function $ab+c+d=2e+f$. Knuth~(\cite{Knuth}, p268) shows how to use this function to construct a multiplication algorithm, which he calls
``Algorithm M''. We
rehearse this algorithm, converted into a program on an Ising net. Let us write $r = (r_nr_{n-1}\ldots r_3r_2r_1)_2$ for the binary representation of an $n$-bit number and $g = (g_mg_{m-1}\ldots g_3g_2g_1)_2$ for that
of an $m$-bit number, we describe the multiplication of $r$ and $g$ to produce an $m+n$-bit product.

\begin{definition}[Knuth nets]

Recall from Example~\ref{Knuth-unit} the net on six vertices $\{a,b,c,d,e,f\}$ whose ground state encodes $ab+c+d=2e+f$; we depict it schematically as:
\[ \includegraphics[scale=2]{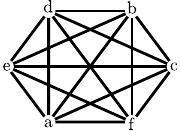} \] 
suppressing the field and coupling terms. In fact, since we will have no need of any other orientations of the six vertices, we shall also suppress these letters.

For each $(i,j)$ satisfying $1 \leq i \leq n$ and $1 \leq j \leq m$, we consider a net $\mathbb K_{i,j} = (\{a_{i,j},b_{i,j},c_{i,j},d_{i,j},e_{i,j},f_{i,j}\},E_{\mathbb K})$, where $E_{\mathbb K}$ is as
in Example~\ref{Knuth-unit}. For example, if we take $m = 4$ and $n = 3$, we have 12 copies of this net. For our convenience, we shall arrange them in a grid with the origin at the top-right: 
\[ \includegraphics{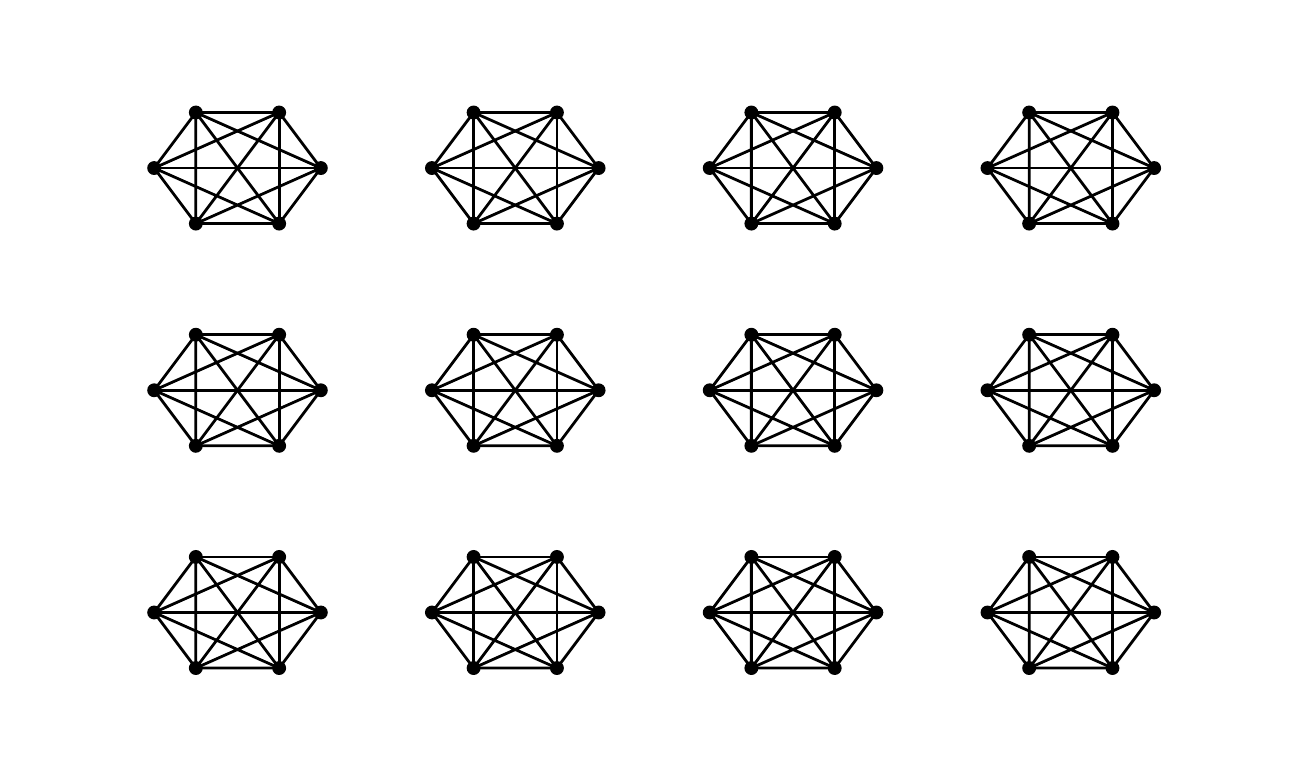} \]

Among these $ij$ unlinked Ising nets, we will perform
various identifications to make a single net; we indicate these identifications with coloured edges. These edges are \emph{not} couplings, they indicate that the two vertices so linked are to be thought of as one vertex.
To link up with the formalism of the preceding, a coloured edge from a vertex $a$ in an Ising net $\mathbb A = (A,E_{\mathbb A})$ to an vertex $b$ in an Ising net $\mathbb B = (B,E_{\mathbb B})$ is the gluing of
$\mathbb A$ with $\mathbb B$ along the pair of functions $\{\ast\} \too A$ and $\{\ast\} \too B$ defined by $\ast \mapsto a$ and $\ast \mapsto b$ respetively.

First, we identify all vertices of the form $a_{i,j}$ with the symbol $r_i$: \[ \includegraphics{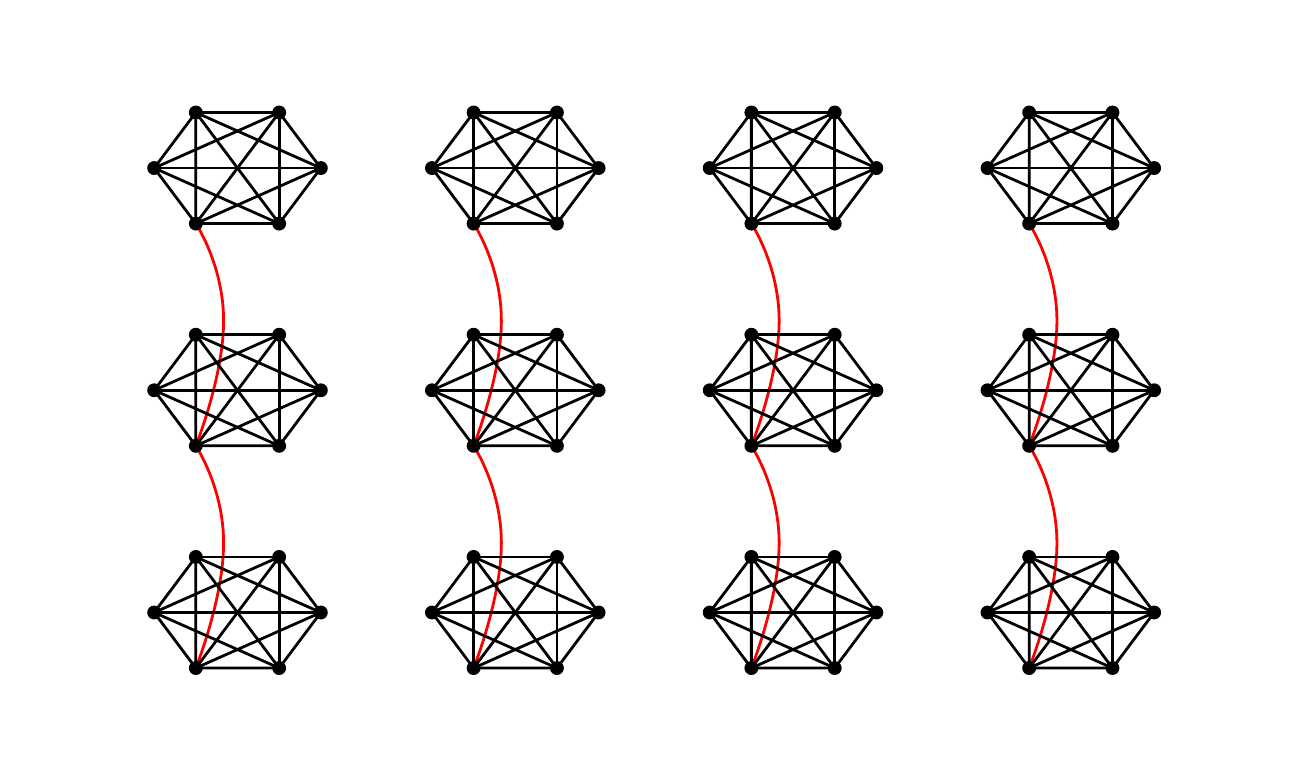} \]
and we identify all vertices of the form $b_{i,j}$ with the symbol $g_j$: \[ \includegraphics{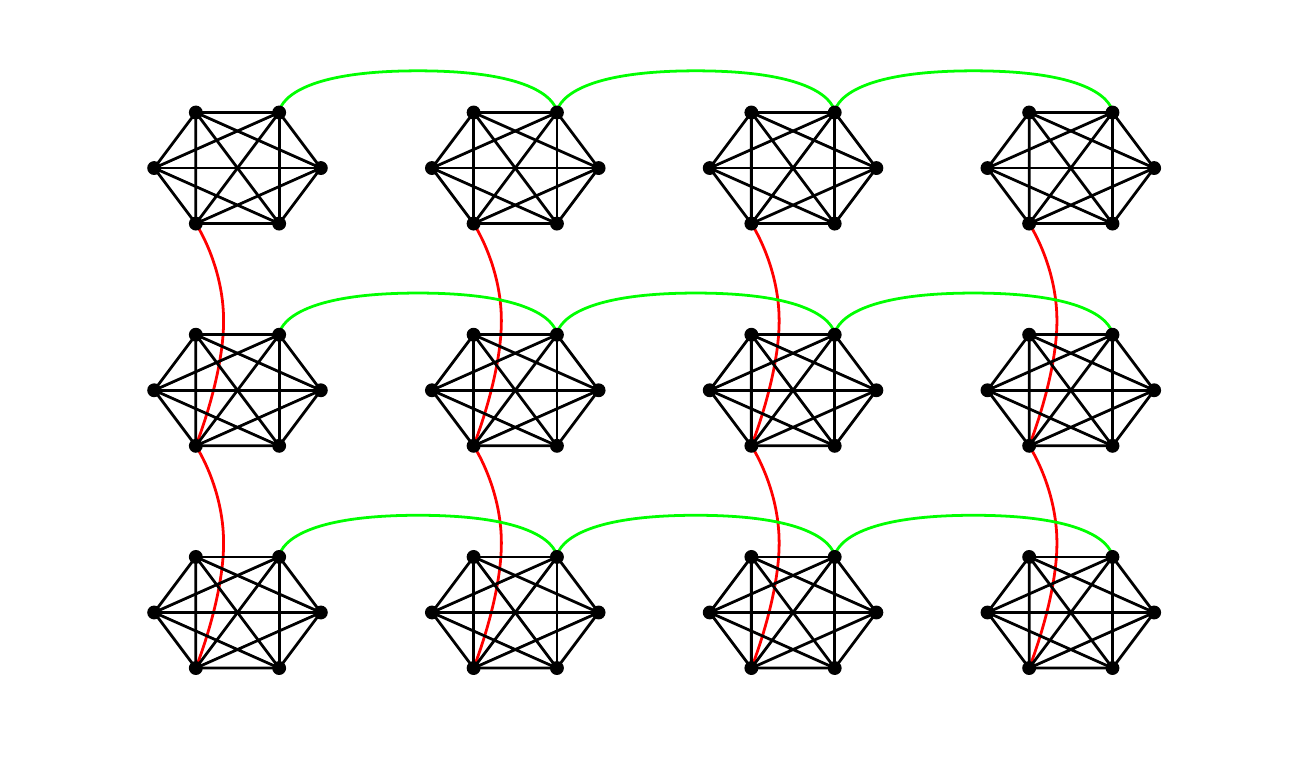} \]

Next, for each $j$ and for
each $i < n$, we identify the symbol $c_{i,j}$ with $e_{i+1,j}$: \[ \includegraphics{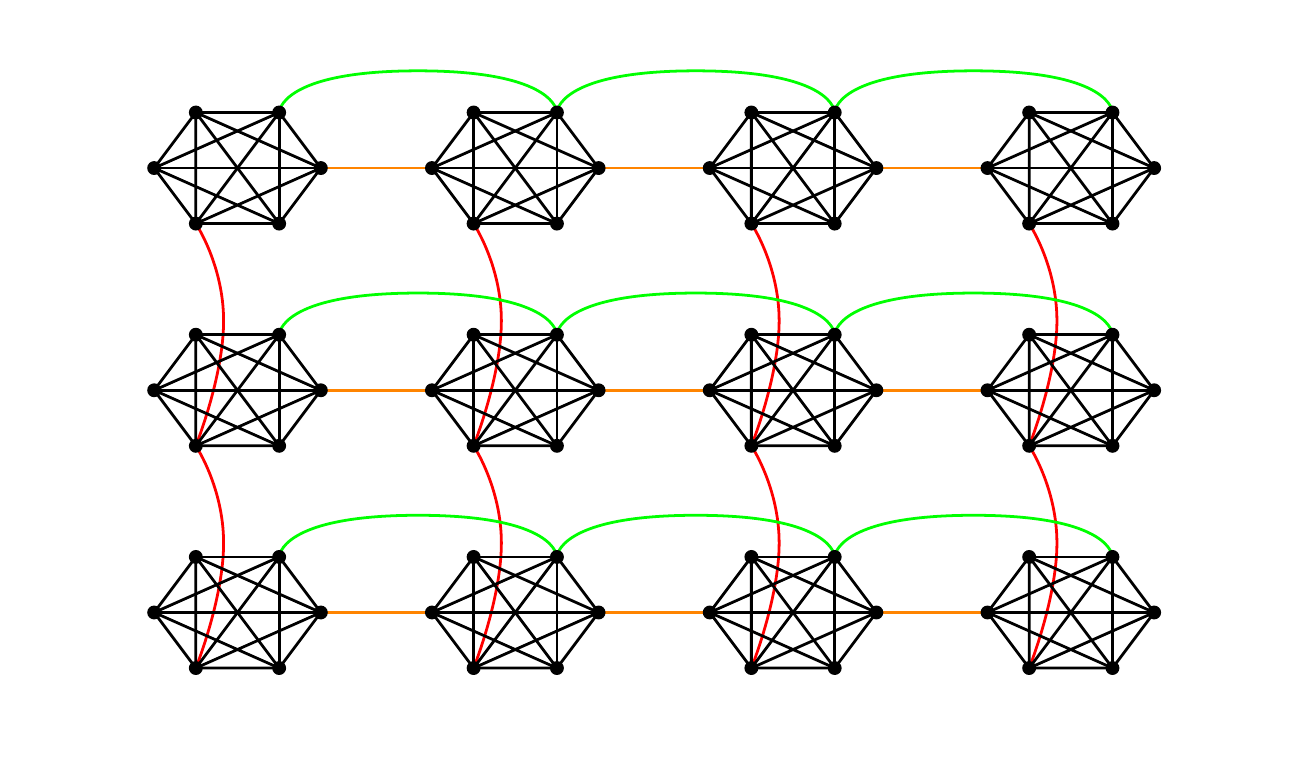} \]

Furthermore, for each $i < n$ and each $j < m$, we identify the symbol $d_{i,j+1}$ with $f_{i+1,j}$: \[ \includegraphics{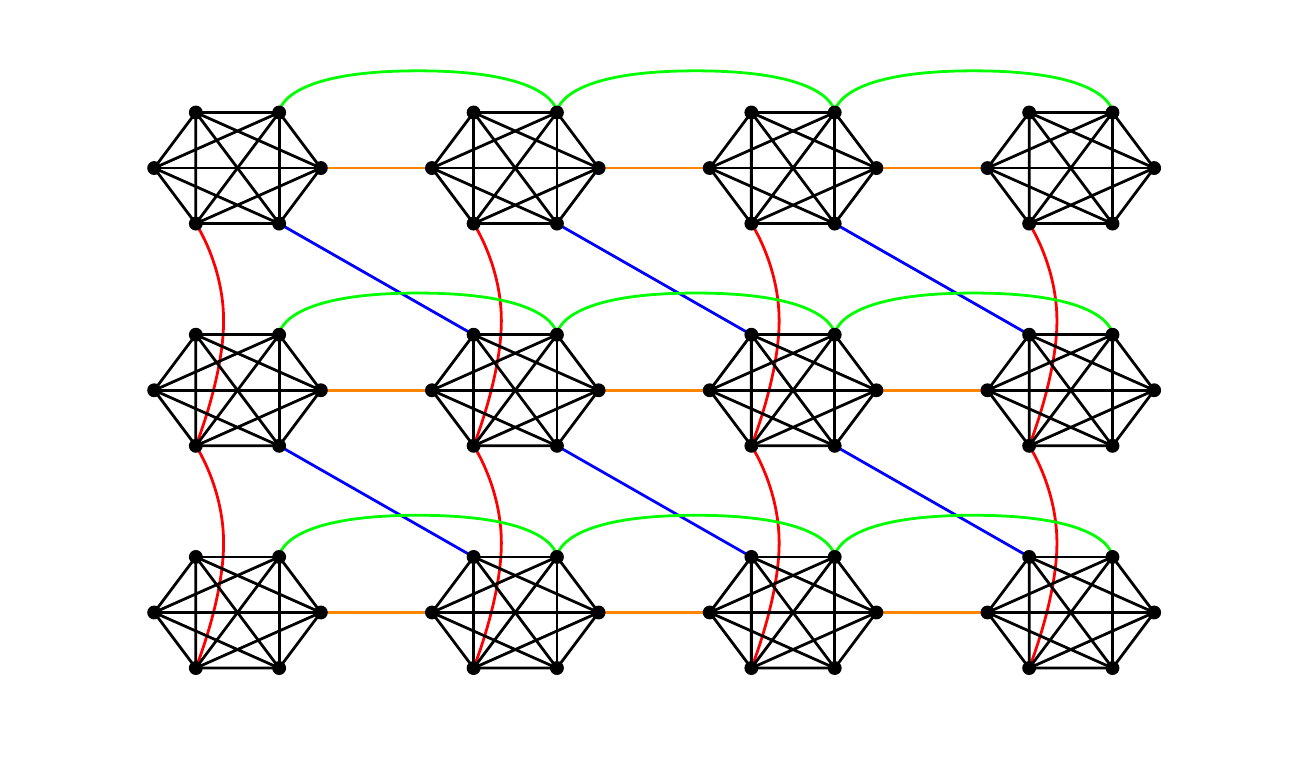} \]

These last two sets of identifications comprise Knuth's ``M4''.

To complete the identifications, we identify $e_{n,j}$ with $d_{n,j+1}$ for all $j < m$: \[ \includegraphics{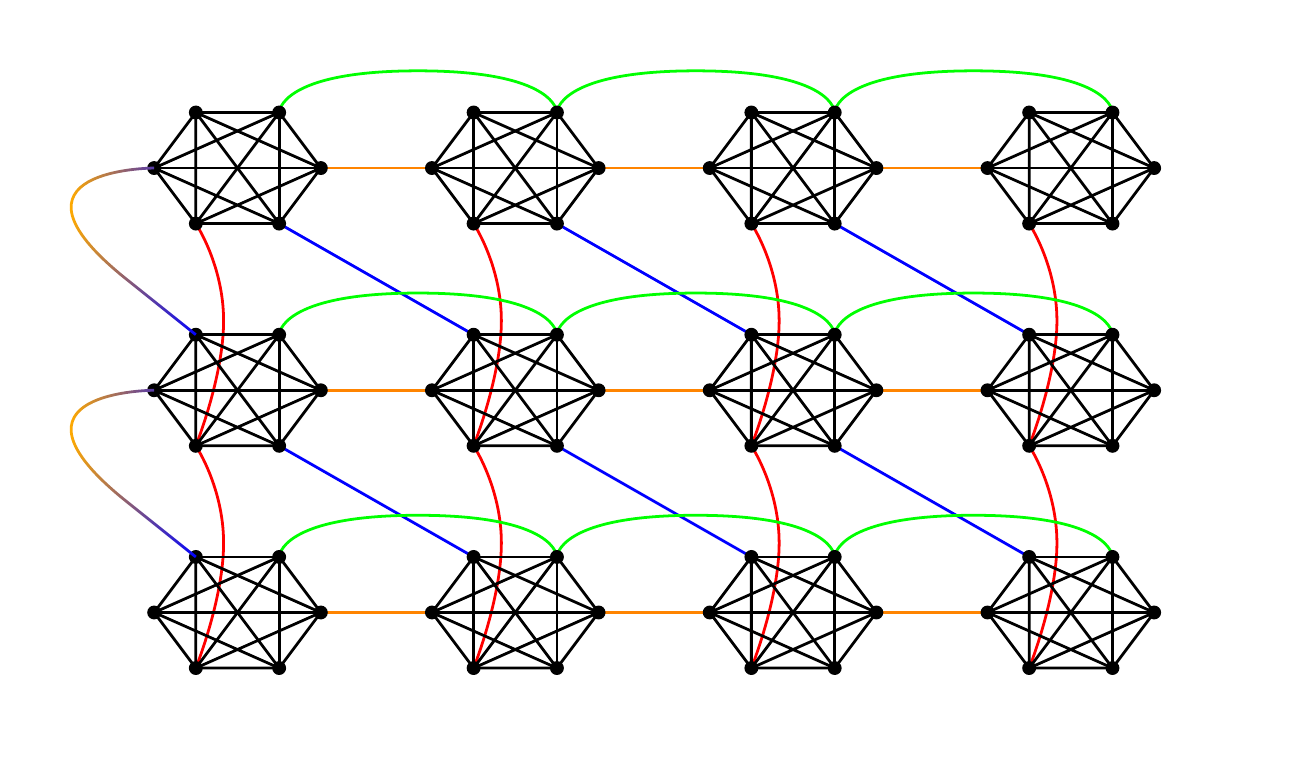} \]

This is Knuth's ``M5''. Finally, we clamp all of the symbols of the form $d_{i,1}$ or $c_{1,j}$ to ``$\dn$'': \[ \includegraphics{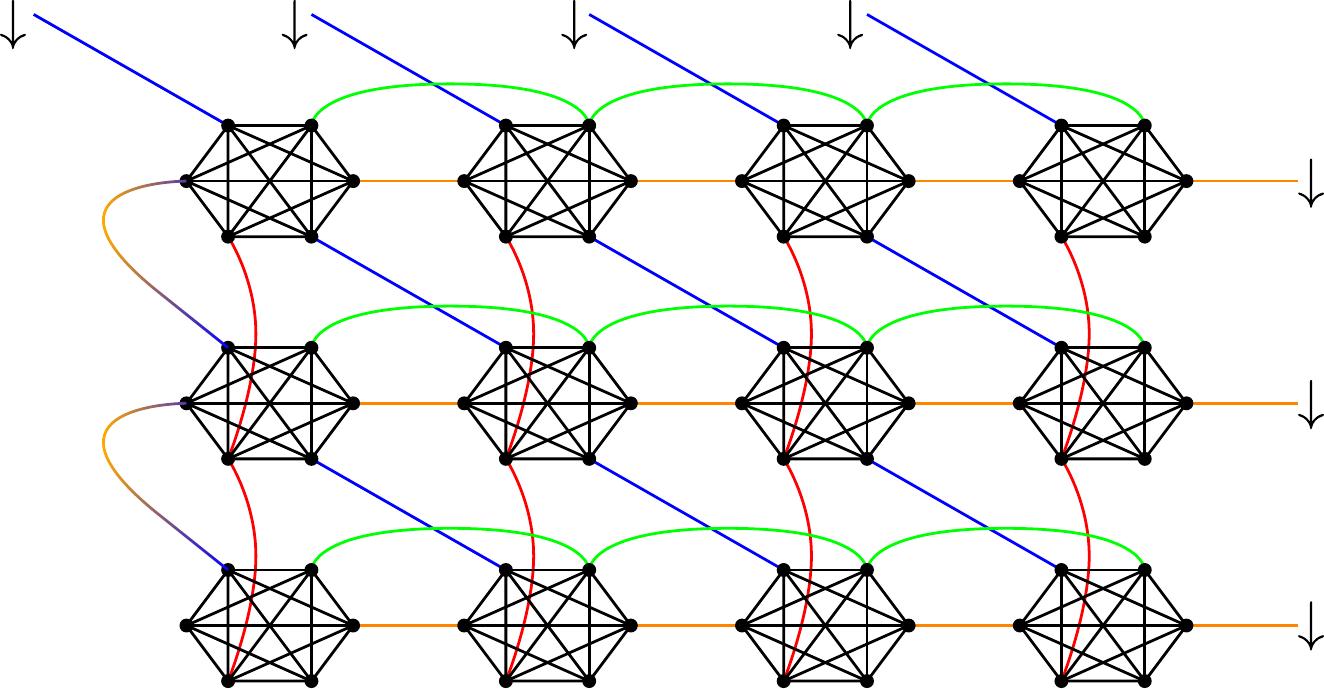} \]

This is Knuth's ``M1'', corresponding to initializing the relevant registers to zero.

In this way, we obtain a net which we call a \emph{Knuth multiplication net}, or, simply a Knuth net, which we write $\knuth{n}{m}$. The output of Algorithm~M is the string
\[ (e_{n,m},f_{n,m},f_{n-1,m},f_{n-2,m},\ldots,f_{2,m},f_{1,m},f_{1,m-1},f_{1,m-2},\ldots,f_{1,2},f_{1,1}) \] which is the big-endian binary representation of the product $rg$. If we highlight this
string in our example net, we have: \[ \includegraphics{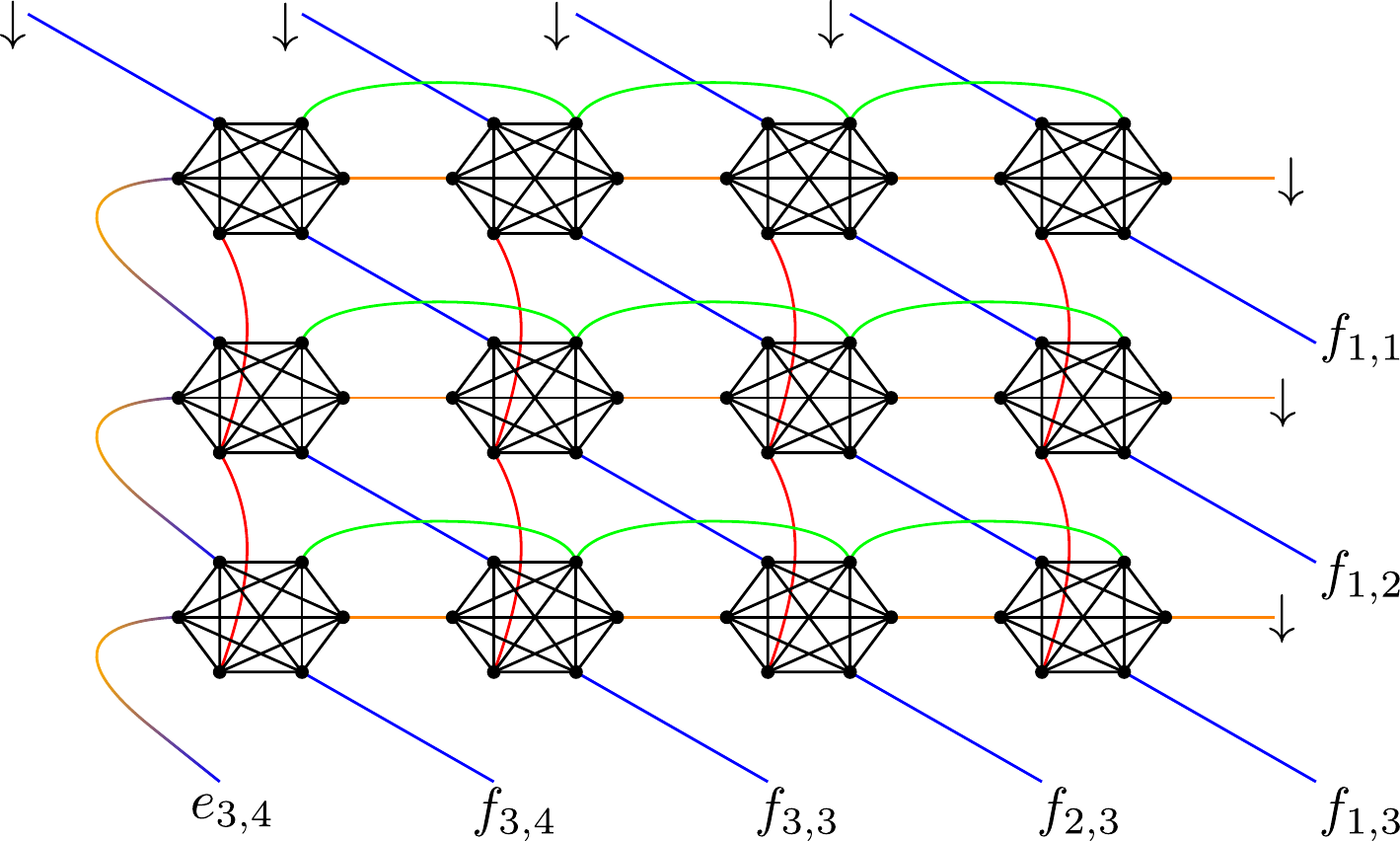} \]
\end{definition}

\begin{definition}[Factoring using Knuth nets] The purpose of Knuth nets is that we can define a program on a Knuth net, the execution of which accomplishes integer factoring.
We define a program on this
Ising net consisting of input set $(e_{n,m},f_{n,m},f_{n-1,m},f_{n-2,m},\ldots,f_{2,m},f_{1,m},f_{1,m},f_{1,m-1},f_{1,m-2},\ldots,f_{1,2},f_{1,1})$ and output set $\{r_i | 1 \leq i \leq m\} \cup \{g_j | 1 \leq j \leq n \}$. 
This is precisely the reversal 
of the net thought of as a multiplication algorithm; hence its suitability for factoring. An execution of this program is a clamping of the input set to the binary representation of an $m+n$-bit number to be factored,
followed by a minimization of the energy function associated to this clamped net, finished with a reading of the values of $r$ and $g$.
\end{definition}

The central difference between Knuth's Algorithm~M and our Ising net version thereof is the atemporal nature of the Ising net. This has great advantages---especially the suitability to minimization---but it also has
drawbacks; for instance, Algorithm~M maintains a set of registers $w_k$ for $1 \leq k \leq m+n$ which are zero ``at first'' and which hold the desired product ``at the end'', having taken on many different values
``through the course'' of the computation. All of the quoted phrases have no meaning in the Ising version; accordingly, instead of $m+n$ internal variables (which are also output!) we must instead maintain many more
auxilliary spins, that is, all of the $c_{i,j}$, $d_{i,j}$, $e_{i,j}$, and $f_{i,j}$, a few of which are zero, a few of which are the desired product, and most of which are simply auxilliary. Broadly speaking, the
time complexity of Algorithm~M becomes the space complexity of our Ising net.

\begin{remark}[Size of Knuth nets]
	The reader may readily verify that the size of the underlying set of the Knuth net $\knuth{m}{n}$ is $2mn + m + n$, thus, any execution of a factoring program on a Knuth net (whose input set is of size $m+n$) involves 
	the minimization of an energy function defined on $2mn$ vertices.
\end{remark}

Recalling the discussion after the proof the Gluing Lemma, this Knuth net will only be suitable for factoring the number $(e_{n,m}f_{n,m}f_{n-1,m}f_{n-2,m}\cdots f_{2,m}f_{1,m}f_{1,m}f_{1,m-1}f_{1,m-2}\cdots f_{1,2}f_{1,1})_2$ 
if this number has a factorization into factors of the given size; alternatively, it can be used to decide if such a factorization exists. Strictly speaking, the input to the factoring algorithm described here is not merely 
the $n+m$-bit number to be factored, but also the sizes $n$ and $m$ of the factors to be obtained. In general, given a composite $p$-bit integer
with factors of unknown sizes, we must consider a net of dimensions $\floor{p/2}$ by $p-2$ or larger; thus, a general factoring method requires a minimizing of a binary function on $O(p^2)$ variables.

\begin{remark}[Known ground state energies]
One pleasant feature of the gluing theorem is that the ground state energy of the glued net can be obtained by adding the ground state energies of the constituent nets. Thus, we know that the ground state energy of
$\knuth{m}{n}$ is given by \[ E^0_{\knuth{m}{n}} = mnE^0_{\mathbb K} = -15mn \]
\end{remark}

\section{Decomposition techniques}

Much effort has been given to trying to make sense of what the ``adiabatic running time'' of an adiabatic algorithm should be; indeed, much effort has been spent to produce a sensible notion of
what an ``adiabatic algorithm'' is---see, for instance, \cite{ChoiAlg},~\cite{Egger},~\cite{FarhiPaths}, merely to whet the appetite. However, quite aside from such considerations, we must confront the fact that the only 
existing candidate for an adiabatic quantum computer~\cite{DWave} comprises only 512 spins.
Using factoring nets of dimension $\floor{p/2}$ by $p-2$, and setting aside geometric restrictions (for discussion of which the reader may consult, for instance,~\cite{ChoiI},~\cite{ChoiII}, or~\cite{Klymko}), 512 spins can 
factor any composite number with no more than 23 bits---that is, a number as big as 8,388,608---which is hardly cryptographically fascinating 
Though technological and scientific progress will doubtless continue apace, it seems safe to assume 
that practical problems of all kinds (not merely factoring) will be comprehensively larger than available hardware for the foreseeable future. Thus, we turn our attention to \emph{decomposition} techniques, that is, methods
by which minimization problems over large sets can be broken down into smaller ones.

\subsection{How good can decomposition be?}

Let us call the problem of determining the ground state of an Ising net by the name \ising. Suppose we fix an algorithm for \ising\ whose running time for an Ising net of $n$ spins is $\Theta(f(n))$ for some function $f$. 
Since \ising\ is known to be NP-complete\cite{Barahona}, we expect that $f$ will be exponential---this is the Exponential Time Hypothesis. 
Let us suppose that we have an oracle for \ising\ when given Ising nets with no more than $n/2$ spins, and let us denote by $d(n)$ the minimum number of times this oracle must be called in any algorithm for \ising\ when
given Ising nets of size $n$. We call the function $d$ the ``decomposability'' of \ising\ and we would like to bound it somehow. Even without invoking this oracle, we have that $f(n)$ is $\Theta(d(n)f(n/2))$, whence
$f(n)$ is in $\Theta[d(n)d(n/2)d(n/4)\cdots d(1)]$ and hence $f(n)$ is in $O[d(n)^{\log_2(n)}]$ since $d$ is clearly increasing. Hence, since we expect $f$ to be exponential, we see that $d$ is in $\Omega[\exp(n/\log_2 n)]$,
which is superpolynomial. Thus, although no general decomposition algorithm can be expected to be polynomial, we have some reason to hope that it might be subexponential, in the sense that $\log d$ is $O(p)$ for any polynomial.
We reiterate that we are not considering the complexity of any adiabatic device, in theory or practice, but merely the cost of decomposition itself.

One common approach which produces good approximate solutions (that is, configurations whose energy is very close to the ground-state energy) is the class of so-called ``local update'' or ''iterated conditional mode'' algorithms.

\begin{definition}[Local Update Algorithms]\label{localupdatealgorithm}

Let $\mathbb I = (I,E_{\mathbb I})$ be an Ising net. A \emph{local update algorithm} for $\mathbb I$ proceeds as follows:
\begin{enumerate}

\item[0] Obtain an initial configuration $x = x_0 \colon I \too \twoset$.
\item[1] Select a ``figure'', that is, a subset $S \subseteq I$. 
\item[2] Form the clamping $c_{\left.x\right|_{I\backslash S}}\left(\mathbb I\right)$ of $\mathbb I$ which clamps everything outside of the figure to its current value under~$x$. Minimize the energy function associated to 
	this clamped net, obtaining a configuration $y \colon S \too \twoset$.
\item[3] Update the assignment $x \colon I \too \twoset$ by redefining $x(s) = y(s)$ for all $s \in S$; this lowers (or possibly merely preserves) the value of $E(x)$.
\item[4] Return to Step~1.

\end{enumerate}

\end{definition}

This process is repeated as desired; under certain conditions, bounds can be given on the quality of the approximations obtained in terms of the number of iterations performed.
For instance, Jung, Kohli, and Shah~\cite{JungKohliShah} give one version of such an algorithm where an Ising net $\mathbb I = (I,E_{\mathbb I})$ with $|I| = n$ can be solved within an error of $\epsilon$ by taking 
$O(\epsilon)n^2\log n$ iterations.
However, their approach relies on certain geometric assumptions about the structure of $E_{\mathbb I}$ which do not apply to our Knuth networks; moreover, we seek global ground states, and not merely low energy states.
 We are nevertheless emboldened to seek a local update algorithm the performance of which (measured by $d(n)$ above) we hope will be subexponential. For an illustrative example, we have implemented a local update
alorithm using Knuth nets--specifically, in Step~0 of Definition~\ref{localupdatealgorithm}, we choose a random assignment $x_0$, in Step~1 we select figures randomly with size half that of the net, and then in Step~2 we 
randomly choose a ground state $y$ from the (generally degenerate) ground state of the figure. We call this algorithm ``Random Half-size Local Updates''. Our measurement of the decomposability, $d(n)$, of Knuth nets of 
size $n$ is shown in Figure~\ref{decomposabilityfig}, and is gently consistent with $d$ being subexponential. 

\begin{figure}[hbt!]
\includegraphics[width=\textwidth]{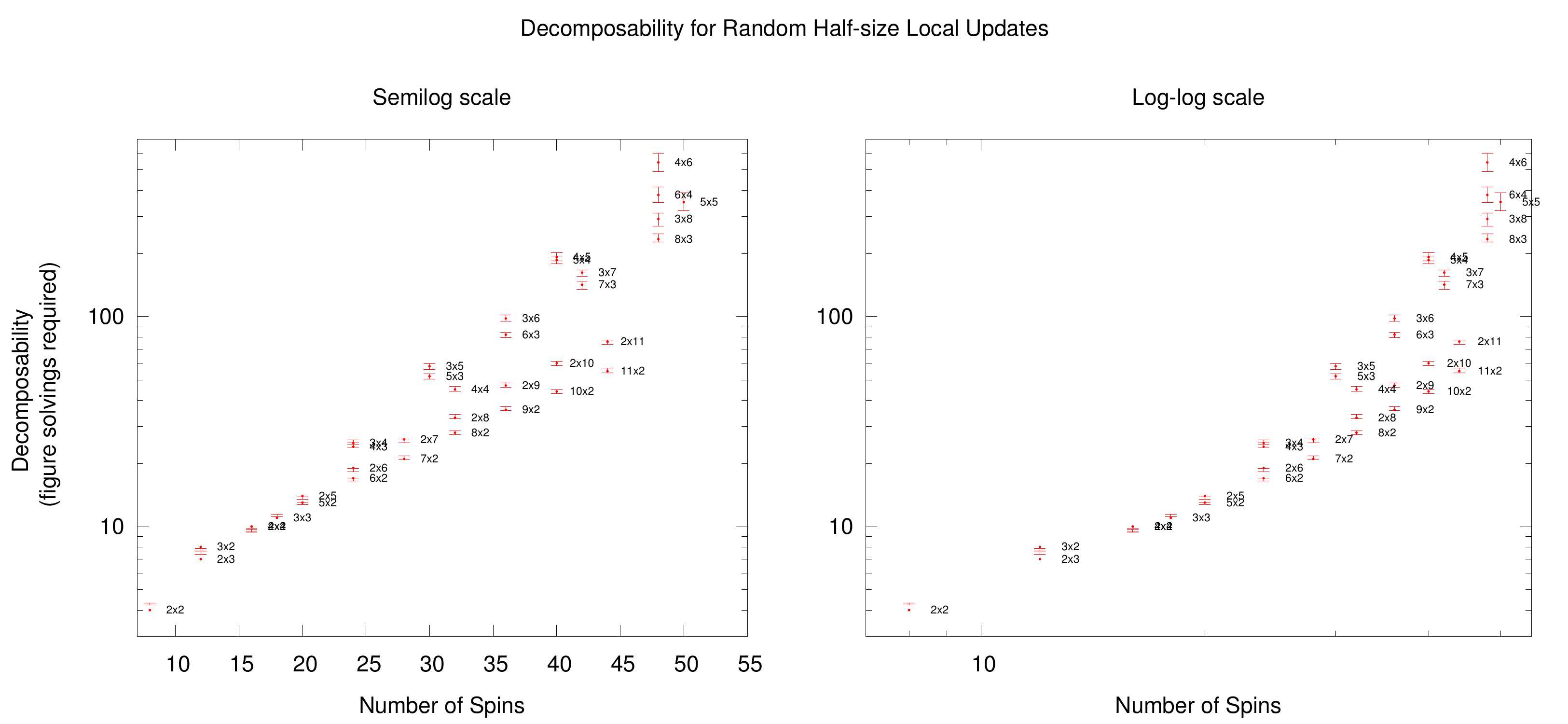}
\caption{Decomposability for Knuth Nets using Random Half-Size Local Updates. Each data point represents the median number of local updates to obtain the global ground states from 10,000 runs at each net size. A point
labelled ``$a$ x $b$'' is a Knuth net with $a$ columns and $b$ rows of full multipliers. The error bars are 95\% confidence intervals for these medians, computed using smoothed bootstraps from the sample data itself.
Note the slight concave-down trend in the semi-log scale, and the slight concave-up trend in the log-log scale, consistent with $d$ being subexponential.}
\label{decomposabilityfig}
\end{figure}

\subsection{Figure generation} The above dataset was generated in a very naive way, to focus attention on the \emph{general} problem of decomposition. However, a practitioner with a specific problem to solve will doubtless
employ more sophisticated techniques. For instance, even without leaving the realm of local update algorithms, one could choose figures using problem-specific knowledge. For instance, in our factoring
example above, each full-multiplier unit can be quickly checked to see if it is in a (local) ground state; the overall ground state is characterized as simultaneously satisfying all such full-multipliers. Spins in 
full-multiplier units which are \emph{not} satisfied are immediately suspect, since at least one of these spins must be flipped to reach the overall ground state. In future work, we intend to examine how this, and more
sophisticated number-theoretic techniques can be used to improve our decomposition techniques. An obvious practical choice is to choose figures which can easily be embedded on existing quantum hardware, for discussion
of which see~\cite{ChoiI}~and~\cite{ChoiII}.

\section{Acknowledgements}
The authors acknowledge the financial support of the Lockheed Martin Corporation.

\bibliographystyle{abbrv}
\bibliography{biblio}

\section*{Appendix}
The categorically-minded reader will have detected a categorical flavour to our definition of programs on Ising nets, and especially to our definition of composition of programs. In this appendix we briefly show that there is
a suitable category of Ising nets, cospans in which model programs on Ising nets in the sense we introduce in this paper.

As before, we define:
\begin{definition}[Ising nets]
An \emph{Ising net} $\mathbb I$ is a pair $(I,E_{\mathbb I} \colon \twoset^I \too \R)$, where $I$ is a set and a function $E_{\mathbb I}$ which associates to each configuration of $I$ its \emph{energy} in $\R$. Furthermore
to specify an Ising net we must specify a designated subset, $\gs \mathbb I \subseteq \twoset^I$, for which $E_{\mathbb I}$ is minimal, the configurations in this subset comprise the \emph{ground state} of the Ising
net $\mathbb I$.
\end{definition}

\begin{definition}
Let $\mathbb I = (I,E_{\mathbb I})$ and $\mathbb J = (J,E_{\mathbb J})$ be Ising nets. A \emph{morphism of Ising nets} $f$ from $\mathbb I$ to $\mathbb J$ is a function (abusively also called $f$) from $I$ to $J$ for
which restriction along $f$ preserves ground states.
\end{definition}

With the evident compositions and identities, we have a category of Ising nets, which we write $\Ising$.

\begin{lemma} Let us write $\Set$ for the category of sets and monomorphisms between them. The obvious forgetful functor $U \colon \Ising \too \Set$ is a fibration.
\end{lemma}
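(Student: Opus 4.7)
The plan is to verify the Grothendieck fibration condition directly: for every Ising net $\mathbb{J} = (J, E_{\mathbb J}, \gs \mathbb{J})$ and every monomorphism $f \colon I \too J$ in $\Set$, I must produce a cartesian morphism $\bar{f} \colon \mathbb{I} \too \mathbb{J}$ in $\Ising$ with $U\bar{f} = f$.

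First I would construct the lift. Set $\mathbb{I} = (I, E_{\mathbb{I}}, \gs \mathbb{I})$ where $E_{\mathbb{I}} \colon \twoset^I \too \R$ is the zero function and
\[
\gs \mathbb{I} := \{\phi \circ f : \phi \in \gs \mathbb{J}\} \subseteq \twoset^I.
\]
The energy function is vacuously minimal on $\gs \mathbb{I}$, so this is a legitimate Ising net in the sense of the appendix. The underlying injection $f$ then serves as a morphism $\bar{f} \colon \mathbb{I} \too \mathbb{J}$ in $\Ising$: restriction of any $\phi \in \gs \mathbb{J}$ along $f$ yields $\phi \circ f$, which lies in $\gs \mathbb{I}$ by construction.

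Next I would verify the universal property. Suppose $h \colon \mathbb{K} \too \mathbb{J}$ is a morphism in $\Ising$ and $g \colon U\mathbb{K} \too I$ is an injection in $\Set$ with $f \circ g = Uh$. Since $U$ is evidently faithful---a morphism in $\Ising$ is determined by its underlying set map---any lift of $g$ is automatically unique, so it suffices to show that $g$ itself is already a morphism $\mathbb{K} \too \mathbb{I}$, i.e.\ that $\psi \circ g \in \gs \mathbb{K}$ for every $\psi \in \gs \mathbb{I}$. Writing $\psi = \phi \circ f$ for some $\phi \in \gs \mathbb{J}$, I compute
\[
\psi \circ g = \phi \circ f \circ g = \phi \circ Uh,
\]
which lies in $\gs \mathbb{K}$ because $h$ is a morphism in $\Ising$. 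Hence $g$ preserves ground states, and setting $\bar{g} := g$ gives the required factorisation $\bar{f} \circ \bar{g} = h$ with $U \bar{g} = g$.

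The only genuine subtlety is identifying the correct ground state structure on $\mathbb{I}$, namely $\gs\mathbb{I} = f^{\ast}(\gs \mathbb{J})$; with that choice in hand, the universal property reduces to an unwinding of definitions together with faithfulness of $U$. The choice of energy function $E_{\mathbb{I}}$ is essentially cosmetic here, though a more canonical cleavage is available via $E_{\mathbb{I}}(s) = \min\{E_{\mathbb{J}}(\phi) : \phi \circ f = s\}$ (with the convention that the minimum over the empty set is $+\infty$), should one wish the fibration to interact with additional structure such as $2$-locality.
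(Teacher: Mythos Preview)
Your proof is correct and follows essentially the same route as the paper: define the ground state of the lifted net to be $f^{\ast}(\gs\mathbb J)=\{\phi\circ f:\phi\in\gs\mathbb J\}$ and then verify cartesianness directly. The only cosmetic differences are that the paper uses the $0/1$ indicator energy function (so that $\gs$ coincides with the argmin under either reading of the appendix definition) and only argues terminality among lifts lying over $f$ itself, whereas you check the full cartesian factorisation property---so your argument is, if anything, more complete.
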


\begin{proof} Given a morphism $f \colon B \too U\mathbb A = A$ in $\Set$, simply define $\mathbb B_f = (B,E_{\mathbb B_f})$ by setting $E_{\mathbb B_f}(b) = 0$ if $b$ can be written as $a \circ f$ for $a \in \gs A$,
and $E_{\mathbb B_f}(b) = 1$ otherwise. Restriction along $f$ clearly preserves ground states. To see that $f \colon \mathbb B_f \too \mathbb A$ is terminal among morphisms in $\Ising$
lying over $f \colon B \too A$, note that the (clearly unique) identity-on-$B$ from $\mathbb B' = (B,E_{\mathbb B'})$ to $\mathbb B_f = (B,E_{\mathbb B_f})$ is a well-defined morphism in $\Ising$ precisely because
the set of ground states of $\mathbb B_f$ as defined here is the \emph{minimal} one making $f$ a valid morphism in $\Ising$.
\end{proof}

A \emph{program} on an Ising net $\mathbb I$ in the sense of Definition~\ref{def-program} is a diagram of the form: 
$\bfig \morphism(0,0)/ >->/<500,0>[S`U\mathbb I;r] \morphism(1000,0)/ >->/<-500,0>[T`U\mathbb I;s] \efig$
By the previous lemma, such diagrams in $\Set$ give rise to cospans in $\Ising$ of the following form:

\noindent $\bfig \morphism(0,0)/->/<500,0>[\mathbb S_r`\mathbb I;r] \morphism(1000,0)/->/<-500,0>[\mathbb T_s`\mathbb I;s] \efig$. If $\C$ is a category with pushouts, we can form a bicategory $\Cospan(\C)$ whose objects are
cospans in $\C$ and in which composition is effected by pushout. We will show that although our category $\Ising$ does not have all pushouts, it still has enough for us to draw a link between cospan bicategories
and the composition of programs-on-Ising-nets given in Definition~\ref{def-composition}.

\begin{definition} Let us say that a span of the form: $ \bfig \morphism(1000,0)/->/<-500,0>[\mathbb T`\mathbb I;s] \morphism(1000,0)/->/<500,0>[\mathbb T`\mathbb J;t] \efig $ is \emph{admissible} if the intersection
of $s^{-1}(\gs \mathbb I)$ and $t^{-1}(\gs \mathbb J)$ is non-empty; that is, there must exist at least one ground state of $\mathbb T$ which is simultaneously the restriction along $s$ of a ground state of $\mathbb I$ and the
restriction of along $t$ of a ground state of $\mathbb J$.
\end{definition}

\begin{lemma}
Let $\bfig \morphism(1000,0)/->/<-500,0>[\mathbb T`\mathbb I;s] \morphism(1000,0)/->/<500,0>[\mathbb T`\mathbb J;t] \efig$ be a span in $\Ising$. This span has a pushout in $\Ising$ if an only it is admissable, moreover,
this pushout can be obtained as:
$\bfig \square/->`->`->`->/[\mathbb T`\mathbb I`\mathbb J`\mathbb I \underset{T}{+} \mathbb J;s`t`\nu_I`\nu_J] \efig$
where $\nu_I \colon I \too I \underset{T}{+} J \oot J \colon \nu_J$ are the canonical pushout injections in $\Set$, and the energy function on $\mathbb I \underset{T}{+} \mathbb J$ is defined as
\[ E_{\mathbb I \underset{T}{+} \mathbb J}\left(s\right) = E_{\mathbb I}(s) + E_{\mathbb J}(s) \]
just as in Lemma~\ref{lem-gluing}.
\end{lemma}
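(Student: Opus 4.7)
The statement is a biconditional about existence (and explicit form) of a pushout in $\Ising$. I will establish the forward direction (admissibility yields the pushout by the stated construction) directly, then deduce the converse by showing that the existence of any pushout in $\Ising$ forces the span to be admissible. The workhorse throughout is the Gluing Lemma (Lemma~\ref{lem-gluing}), which characterises the ground states of the summed energy function precisely under the compatibility hypothesis.

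\textbf{Forward direction.} Assume the span is admissible. Equip the set $I \underset{T}{+} J$ with the summed energy function $E_{\mathbb{I} \underset{T}{+} \mathbb{J}}(\sigma) = E_{\mathbb I}(\sigma|_I) + E_{\mathbb J}(\sigma|_J)$, and declare the ground states to be those $\sigma$ whose restrictions $\sigma|_I$ and $\sigma|_J$ are ground states of $\mathbb{I}$ and $\mathbb{J}$ respectively. By Lemma~\ref{lem-gluing} this coincides with the set of global minima of the summed energy, and admissibility guarantees that the set is non-empty, so we have a legitimate Ising net. The same Lemma shows that $\nu_I$ and $\nu_J$ are $\Ising$-morphisms, since restriction of a ground state along either injection is a ground state by construction, and commutativity of the square is inherited from the underlying pushout in $\Set$.

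\textbf{Universal property.} Given another co-cone $f \colon \mathbb{I} \to \mathbb{K}$ and $g \colon \mathbb{J} \to \mathbb{K}$ with $f \circ s = g \circ t$, the underlying monomorphisms in $\Set$ induce a unique mono $h \colon I \underset{T}{+} J \to K$ satisfying $h \circ \nu_I = f$ and $h \circ \nu_J = g$. To see that $h$ is an $\Ising$-morphism, take any $k \in \gs \mathbb{K}$: since $f$ is an $\Ising$-morphism, $(k \circ h) \circ \nu_I = k \circ f$ lies in $\gs\mathbb{I}$, and symmetrically $(k \circ h) \circ \nu_J \in \gs\mathbb{J}$. Hence $k \circ h$ has the property that both of its restrictions are ground states, and so Lemma~\ref{lem-gluing} places it in $\gs(\mathbb{I} \underset{T}{+} \mathbb{J})$, as required.

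\textbf{Reverse direction and main obstacle.} Conversely, suppose the span admits some pushout $\mathbb{P}$ in $\Ising$ with legs $\pi_I$ and $\pi_J$. Since $\twoset^{|\mathbb{P}|}$ is finite, $E_{\mathbb{P}}$ attains its minimum and $\gs\mathbb{P}$ is non-empty; pick any $p \in \gs\mathbb{P}$. Because $\pi_I$ and $\pi_J$ are $\Ising$-morphisms, $p \circ \pi_I \in \gs\mathbb{I}$ and $p \circ \pi_J \in \gs\mathbb{J}$, and commutativity of the square at $\mathbb{T}$ gives $(p \circ \pi_I) \circ s = (p \circ \pi_J) \circ t$; this common restriction is then an element of $s^{-1}(\gs\mathbb{I}) \cap t^{-1}(\gs\mathbb{J})$, witnessing admissibility. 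The delicate point in the whole argument is the universal-property step: one must fluently translate between the two characterisations of ground states---``both restrictions are ground states'' versus ``minimum of the summed energy function''---and that translation is licensed exactly by Lemma~\ref{lem-gluing}. A small notational trap to avoid is that $s^{-1}(\gs\mathbb{I})$ denotes configurations of $T$ arising as restrictions of ground states of $\mathbb{I}$ rather than a literal set-theoretic preimage.
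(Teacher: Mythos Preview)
Your proposal is correct and follows the same approach as the paper: invoke the Gluing Lemma under the admissibility hypothesis to identify the ground states of the glued net, check that the canonical injections are $\Ising$-morphisms, and for the converse restrict an arbitrary ground state of the assumed pushout along both legs. In fact you spell out the universal-property verification more explicitly than the paper, which simply asserts that the conclusions of Lemma~\ref{lem-gluing} ``amount precisely'' to the pushout property; your detailed check that the induced map $h$ preserves ground states is a welcome elaboration of that step.
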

\begin{proof} First we show that the given square is a pushout. The underlying maps of the square clearly commute; we must show that $\nu_I$ and $\nu_J$ are morphisms in $\Ising$ and we must verify the pushout property. 
Since the cospan $(s,t)$ is admissable, the nets $\mathbb I$ and $\mathbb J$ are compatible in the sense of Lemma~\ref{lem-gluing}. Thus, we can apply the results of that lemma, observing that the conclusions there
amount precisely to the assertions that $\nu_I$ and $\nu_J$ are morphisms in $\Ising$ and that the square is a pushout.
Finally, suppose that cospan $(s,t)$ has a pushout in the form of a cospan $\bfig \morphism(1000,0)/<-/<-500,0>[\mathbb X`\mathbb I;u] \morphism(1000,0)/<-/<500,0>[\mathbb X`\mathbb J;v] \efig$ in $\Ising$. Then simply choose any ground
state $x \colon X \too \twoset$ and notice that $x \circ u \circ s = x \circ v \circ t$ is in the intersection of the $s$-restriction of the ground states of the ground states of $\mathbb I$ and the $t$-restriction of the
ground states of $\mathbb J$.
\end{proof}

\end{document}